\newcommand{\R}{\mathbb{R}}
\newcommand{\E}{\mathbb{E}}
\DeclareMathOperator{\var}{var}
\DeclareMathOperator{\cov}{cov}
\newcommand{\given}{\,\vert\,}
\newcommand{\st}{\,\colon\,} 
\newcommand{\one}{\mbox{\textbf{1}}}
\newcommand{\overbar}[1]{\mkern 1.5mu\overline{\mkern-1.5mu#1\mkern-1.5mu}\mkern 1.5mu}
\newtheorem{definition}{Definition}
\newtheorem{lemma}{Lemma}
\newtheorem{theorem}{Theorem}
\newtheorem{example}{Example}
\newcommand{\anc}{\mathcal{A}}
\newcommand{\ancs}{\mathcal{A}}
\newcommand{\pedigree}{\mathcal{P}}
\newcommand{\sites}{\mathcal{G}}
\newcommand{\muts}{M}
\renewcommand{\path}[2]{{{[#1 \leftrightarrow #2]}}}
\newif\ifdetails
\newcommand{\details}[1]{\ifdetails{#1}\else{}\fi}
\begin{document}

\title{An empirical approach to demographic inference with genomic data}
\author{Peter L.\ Ralph}

\maketitle

\begin{abstract}

  Inference with population genetic data
  usually treats the population pedigree as a nuisance parameter,
  the unobserved product of a past history of random mating.
  However, the history of genetic relationships in a given population
  is a fixed, unobserved object,
  and so an alternative approach is to treat this network of relationships
  as a complex object we wish to learn about,
  by observing how genomes have been noisily passed down through it.
  This paper explores this point of view,
  showing how to translate questions about population genetic data into
  calculations with a Poisson process of mutations on all ancestral genomes.
  This method is applied to give a robust interpretation to the $f_4$ statistic used to identify admixture,
  and to design a new statistic 
  that measures covariances in mean times to most recent common ancestor between two pairs of sequences.
  The method more generally interprets population genetic statistics in terms of sums of specific functions
  over ancestral genomes,
  thereby providing concrete, broadly interpretable interpretations for these statistics.
  This provides a method for describing demographic history without simplified demographic models.
  More generally,
  it brings into focus the population pedigree, 
  which is averaged over in model-based demographic inference.

\end{abstract}

\section*{Introduction}

For most of the history of the field,
geneticists made great progress with data that comprise
a tiny fraction of all that is in the genome.
Even the hundreds of thousands of markers on modern genotyping chips make certain inferences impossible
due to ascertainment of these markers.
Today, whole-genome sequence is being used to make striking new discoveries
about health and human history \citep[e.g.][]{bycroft2017genomewide,haak2015massive};
such data will soon be commonplace as well in nonmodel organisms.
The scale and unbiased nature of these data opens up new opportunities,
but will also require new methods,
as many assumptions made previously may no longer be necessary.

To do this, it is helpful to begin with the process that generated these data.
The rules by which the majority of genetic material 
is inherited by diploid, sexually reproducing organisms are simple:
each new individual carries two distinct copies of the genome 
-- one from mom, one from dad --
and each of these copies is formed by recombining different parts 
of the parent's two copies so as to make a whole.
The inherited copies may be modified by mutation,
which provides the raw material not only for evolution,
but also the means for us to learn about those genomes' histories.

The network describing parent-offspring relationships between past and present members of a population is known as the \emph{population pedigree};
adding an encoding of the genetic relationships 
-- which parts of each genome was inherited from which copy of the parental genomes --
produces the \emph{ancestral recombination graph} \citep[ARG, described by][]{griffiths1997ancestral}.
Supposing that any particular position in one of those copies of the genome 
was inherited from a single one of the two ancestral copies
-- for the most part true --
then at each point on the genome these genetic relationships form a tree 
(known as the \emph{gene tree} or \emph{gene genealogy}) 
that is embedded within the larger pedigree.
This information can be efficiently represented as a \emph{tree sequence} \citep{kelleher2016efficient},
and can now be produced on a genomic scale from population simulations \citep{kelleher2018efficient,haller2018treesequence}. 
Just as the population pedigree can be thought of as constraining the collection of gene trees,
so the ancestral recombination graph constrains the patterns of concordance and discordance of genetic variants 
created by mutation within each population.

Inference with population genetic data usually works
by specifying a \emph{population model} 
-- e.g., a randomly mating population of fixed size $N$ --
that determines a probability distribution on the population pedigree,
and then assuming that mutation and recombination occurs independently within this
\citep{ewens2004mpg}.
This paper deals with the intermediate layer --
if we take the population pedigree, or even the ancestral recombination graph, as \emph{fixed},
then what can we learn about it?
What \emph{are} the statistics of population genetics telling us about it?
This is partly a matter of philosophy:
should the population pedigree that has actually occurred be treated as 
an instantiation of a random process, with the goal of inferring parameters of that process?
Or, is it better to treat the population pedigree as a nonrandom but complex and partially observed object
that we wish to estimate summary statistics of?
The former view may be most appropriate if we have one or several well-defined demographic models we believe are close to the truth,
but the latter is useful for making robust inferences in the absence of any realistic demographic model,
as well as for identifying whether we have power to distinguish such models.

The approach of this paper is simple:
Assuming that mutations occur independently of demography and recombination (and are hence neutral),
these can be treated as a Poisson process on the set of all ancestral genomes, 
and most population genetics statistics can be written as 
integrals of specific functions against this Poisson process.
By treating the ancestral recombination graph as fixed,
the means and variances of these statistics can be found in terms of integrals of the same functions
against the mutation rate,
and can therefore be interpreted as descriptive statistics of the 
unknown, but real, ancestral recombination graph.
Stochastic models of population genetics generally include 
demography,
recombination,
and mutation:
this paper takes the outcomes of the first two as fixed,
dealing only with the randomness of mutation.
More information is obtainable by also treating recombination as random,
but it is useful to begin with only mutation,
especially as 
recombination itself is not directly observed, 
and many commonly used population genetics statistics do not model recombination 
(e.g., the site frequency spectrum).

The remainder of this paper is structured as follows.
After a review of some of the literature,
two main results are presented in non-technical terms,
given for their inherent interest as well as 
examples of the general method of calculation that is
presented in the subsequent section.
Next come proofs of the main results,
followed by some more discussion of the results and future work.

\subsection*{Background}

The main point of this paper is 
to see what can be learned about the population pedigree,
or more properly the ARG,
by taking an empirical point of view,
i.e., treating it as a fixed, but unknown object.
Others have taken this approach, albeit usually less explicitly.
\citet{slatkin1991inbreeding} 
interprets probabilities of identity by descent and $F_{ST}$ as summaries of the underlying gene trees;
the discussion is framed in terms of randomly mating populations,
but much of the discussion is applicable without thinking of the trees as random. 
\citet{mcvean2002genealogical,mcvean2009genealogical}
continued in this vein,
explaining what two common population genetics statistics,
linkage disequilibrium and the principal components decomposition,
have to say about these trees.
\citet{wakeley2012genealogies} begins from a similar philosophical point, 
that the pedigree should be treated as a fixed parameter,
and goes on to show that data generated under a fixed pedigree
differ very little from those generated under a standard coalescent model
(which does not incorporate correlations due to a fixed pedigree).

Formal methods describing expectations of gene transmission in a known pedigree
have long been used by plant and animal breeders.
In particular, 
\citet{wright1922coefficients}'s method of ``path coefficients''
treated the pedigree as a known object, averaging over recombination and segregation.
These methods are related to a parallel issue arising in genetic association studies
\citep[reviewed by][]{speed2014relatedness}
where relatedness of samples is a major confounding factor,
and alternative approaches can be seen as estimating the actual degree of genetic relatedness
or as averaging over it.

The empirical point of view is much more natural in the context of phylogenetics,
where between most groups of species, most loci have nearly the same genealogical relationship,
and the goal is to directly infer this common tree.
For instance,
\citet{gillespie1979evolutionary} pointed out that variation in coalescence time contributed to variance in between-species divergence,
and \citet{edwards2000perspective} further partition the variance in divergence 
into components induced by the substitution process and by the demographic process. 
Similarly, studies of nonrecombining loci often focus on the single underlying tree.
The tree that describes relationships between, say, mitochondria
provides a valuable look into population history and dynamics \citep{avise1987intraspecific,templeton1995separating},
and is sometimes of independent interest \citep{cann1987mitochondrial,vigilant1991african}.
\citet{tang2002frequentist} studied the ``frequentist'' approach 
to estimating time to most recent common ancestor (TMRCA) for modern human Y-chromosomes.
\citet{hudson2007variance} continued this work, finding the variance of 
\citet{thomson2000recent}'s TMRCA estimator (which is based on the density of segregating derived mutations).
Here, we trade precise knowledge of one tree for vague knowledge about many, 
and are aided by the theoretical tools of phylogenetics \citep{semple2003phylogenetics}.

Early examples of using whole-genome data to estimate an empirical quantity of the population pedigree
were the inferences of gene flow between diverged species
by \citet{kulathinal2009genomics} (between \textit{Drosophila} species),
and by \citet{green2010draft} (Neanderthal to modern humans).
A wider family of similar statistics were subsequently developed in the human population genetics literature 
\citep[e.g.,][]{durand2011testing,patterson2012ancient},
and have formed the basis for many new insights into human history and evolution.
In recent years, a substantial body of work has used these statistics,
often applied to ancient DNA from ancient hominid remains,
to add new dimensions to our understanding of how modern humans evolved and colonized the globe
\citep[e.g.,][]{damgaard2018ancient,mccoll2018prehistoric}.
One of the main tools in this work is the $f_4$ statistic,
which is studied in detail below.
Many properties of these statistics are described in \citet{peter2016admixture},
but the general behavior of these statistics under complex models of population history remains unclear.
Understanding these in a more general context was a major impetus for the present work.

\section*{Main results}

The purpose of this paper is to outline a general method of calculation,
the utility of which is best communicated with examples.
First, a few assumptions.
I assume that at each site there are no more than two variants seen
(the proportion of triallelic sites within populations is extremely small, but see \citet{jenkins2011effect});
and that at each site a reference allele has been chosen;
``allele frequencies'' are frequencies of the other, alternate allele.
However, all statistics developed here are invariant under relabeling of alleles.

I also make the standing assumption of low per base pair mutation rate  --
precisely, the infinite sites model of mutation \citep{watterson1975number},
under which each new mutation occurs at a distinct site,
and is hence observable.
This assumption is quite common in population genetics, 
where the typical time separating samples is small enough to make this assumption a good one
(but I return to this below).

As a first example, consider nucleotide diversity,
which for a set $X$ of sampled genomes is denoted $\pi(X)$
and is defined to be the mean proportion of genotyped sites
at which two genomes chosen randomly, without replacement from $X$ differ.
If each mutation occurs at a unique site, two genomes $x$ and $y$
differ at a site if and only if there has been a mutation at that site
in some ancestor from which either $x$ or $y$, but not both, has inherited that site.
In other words, sites that contribute to $\pi(X)$ through the comparison of $x$ and $y$ are those
at which a mutation has occurred on the branches leading from $x$ and $y$ back to their common ancestor.
If mutations occur with probability $\mu$ per generation and per site, 
nucleotide diversity is therefore an estimator of the mean time to most recent
common ancestor -- more precisely, $\pi(X) \approx 2 \mu \bar t_X$,
where $\bar t_X$ is the number of generations from two random samples from $X$
back to their common ancestor,
averaged over choices of samples and sites in the genome.
How precisely does $\pi(X)$ estimate this mean time to most recent common ancestor?
Suppose for simplicity that there are only two genomes in the set $X$.
Randomness only enters through the addition of mutations:
if we compute $\pi(X)$ averaging across $G$ sites,
then $G \pi(X)$ is Poisson with mean $2\mu \bar t_X G$;
and so the standard deviation of $\pi(X)$ is $\sqrt{2 \mu \bar t_X / G}$. 
Genomic data should give very accurate estimates of $\bar t_X$
in regions where the mutation rate is constant.

Next, let's apply the same intuition to
the $f_4$ statistic \citep{patterson2012ancient,peter2016admixture}:
if $X$, $Y$, $U$, and $V$ are four sets of sampled genomes,
and $p_X(\ell)$ is the allele frequency among the samples in $X$ at site $\ell$, etcetera,
then the $f_4$ statistic of these samples is 
\begin{align} \label{eqn:f4_definition}
  f_4(X,Y;U,V) = \frac{1}{G} \sum_{\ell=1}^G ( p_X(\ell) - p_Y(\ell) )( p_U(\ell) - p_V(\ell) )
\end{align}
where the sum is over some set of $G$ sites.
Notice that $f_4(X,Y;U,V) = -f_4(Y,X;U,V)$.
This implies that if the relationship of $X$ and $Y$ to $U$ and $V$ is symmetric,
then the expected value of the statistic is zero.
In particular, it is zero under a tree-based model of populations,
in which $X$ and $Y$ diverged from a common ancestral population more recently than their common ancestors with $U$ or $V$,
and there was no subsequent gene flow from $U$ or $V$.
\citet{green2010draft} and \citet{reich2010genetic} used these statistics
to provide evidence for gene flow from archaic hominids into modern humans.
For example,
taking $X$ to be a sample of sub-Saharan Africans, $Y$ a sample of western Europeans, $U$ a Neanderthal, and $V$ a chimpanzee,
a significantly negative value
implies that European allele frequencies are more correlated with Neanderthal
than expected under a model of no admixture;
and given archaeological evidence, 
this supports interbreeding between Neanderthals and modern humans after leaving Africa.
However, it is not at first clear how the statistic should perform under more realistic population histories
(e.g., asymmetric gene flow between sub-Saharan Africa and Mediterranean populations).

The $f_4$ statistic is usually motivated in the context of a particular \emph{population model}
of randomly mating populations that may split or merge with each other
(so their history is described by a simple admixture graph).
Is it still useful even if history is not well-described
as occasional interactions between a few randomly mating populations?
A thought experiment is useful at this point.
Suppose that in addition to nucleotide sequences, 
we also had access to any sort of genealogical information we might want.
Would the genomes still be useful in fitting the model?
The answer is \emph{no} --
since the model is neutral, genomes only provide information about past history
insofar as they shed light on past genealogical relationships.
In other words, the genealogies, if we had them, 
would be a sufficient statistic for inference of these population models.
So, the $f_4$ statistic 
describes something about those genealogies
that is informative about the presence or absence of gene flow between random mating populations.
For understanding the $f_4$ statistic 
-- both under the model above and in other contexts --
it is helpful to describe this ``something about those genealogies''.
Imagine annotating every possible ancestral individual
(i.e., everyone who ever lived) 
with how much each set of samples inherit from that individual.
The results below provide a (model-free) interpretation of the $f_4$ statistic
as an estimator of a summary statistic of these annotations 
(illustrated in Figure \ref{fig:f4_ex}).

\begin{theorem}[$f_4$ statistic]
    \label{thm:f_4}
  Let $X$, $Y$, $U$, and $V$ be collections of $n_X$, $n_Y$, $n_U$, and $n_V$ chromosomes, respectively,
  that have been genotyped at $G$ sites.
    Under the infinite-sites model of mutation, the following equivalences hold.
 \textbf{(I)} For each position $\ell$ in the genome and each ancestral individual's chromosome, $a$, 
    let $F_X(a,\ell)$ denote the proportion of the samples in $X$ that have inherited from $a$ at position $\ell$, 
  and similarly for $F_Y$, $F_U$, and $F_V$.
  Then 
  \begin{align} \label{eqn:f4_expectation}
    \E[ f_4(X,Y;U,V) ] = \frac{1}{G} \sum_{\ell=1}^G \sum_a \mu(\ell) ( F_X(a,\ell) - F_Y(a,\ell) )( F_U(a,\ell) - F_V(a,\ell) ) ,
  \end{align}
  where $\mu(\ell)$ is the per-generation mutation rate at site $\ell$,
  and the sum over $a$ is the sum over all ancestral chromosomes.
 \textbf{(II)} Equivalently, for any set of samples $(x,y,u,v)$, 
  let $S_{xu|yv}(\ell)$ denote
  the lengths of any branches in the tree at position $\ell$
  between the most recent common ancestor of $x$ and $u$ 
  and any ancestor of $y$ or $v$,
  or between the most recent common ancestor of $y$ and $v$ and any ancestor of $x$ or $u$.
  (This is the length of the internal branch in the unrooted tree at position $\ell$
  if it has topology $xu|yv$, and is zero otherwise.) 
  Then
  \begin{align} \label{eqn:f4_expectation_branches}
    \E[ f_4(X,Y;U,V) ] = \frac{1}{n_X n_Y n_U n_V} \sum_{x, y, u, v} \frac{1}{G} \sum_{\ell=1}^G \mu(\ell) ( S_{xu|yv}(\ell) - S_{xv|yu}(\ell) ) ,
  \end{align}
  where the sum is over all choices of $x \in X$, $y\in Y$, $u \in U$, and $v \in V$.
  Furthermore,
  \begin{align} \label{eqn:f4_variance}
    \var[ f_4(X,Y;U,V) ] 
    \le 
    \frac{1}{n_X n_Y n_U n_V} \sum_{x, y, u, v} \frac{1}{G^2} \sum_{\ell=1}^G \mu(\ell) ( S_{xu|yv}(\ell) + S_{xv|yu}(\ell) ) .
  \end{align}
\end{theorem}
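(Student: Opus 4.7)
The plan is to exploit that, conditional on the fixed ARG, mutations form an inhomogeneous Poisson process on pairs $(a,\ell)$ of ancestral one-generation chromosome $a$ and site $\ell$, with intensity $\mu(\ell)$. Under the infinite-sites assumption each site carries at most one mutation, so at any site $\ell$ whose (unique) mutation lies on ancestor $a$ one has $p_X(\ell) = F_X(a,\ell)$, and analogously for $Y,U,V$; at unmutated sites all four frequencies vanish. All moments of $f_4$ therefore reduce to Poisson integrals over $(a,\ell)$.

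For the expectation formula \eqref{eqn:f4_expectation}, the single-site product $W_\ell := (p_X(\ell)-p_Y(\ell))(p_U(\ell)-p_V(\ell))$ equals $(F_X(a,\ell)-F_Y(a,\ell))(F_U(a,\ell)-F_V(a,\ell))$ on the probability-$\mu(\ell)$ event that a mutation lands on ancestor $a$, and is zero otherwise, so taking expectation, summing over $a$ and $\ell$, and dividing by $G$ yields \eqref{eqn:f4_expectation}. For the branch-length rewriting \eqref{eqn:f4_expectation_branches}, expand each $F_X(a,\ell) = n_X^{-1}\sum_{x \in X}\mathbf{1}[x \text{ inherits from } a \text{ at } \ell]$ (and likewise for $Y,U,V$) and swap the order of summation. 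For a fixed tuple $(x,y,u,v)$ the inner sum $\sum_a \mu(\ell)(\mathbf{1}_{x\to a}-\mathbf{1}_{y\to a})(\mathbf{1}_{u\to a}-\mathbf{1}_{v\to a})$ is a signed combination of four formally divergent common-ancestry lengths whose divergences cancel; a case check over the three possible topologies of the four-leaf gene tree on $\{x,y,u,v\}$ at $\ell$ identifies this combination with $\mu(\ell)(S_{(x,u);(y,v)}(\ell)-S_{(x,v);(y,u)}(\ell))$.

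For the variance bound \eqref{eqn:f4_variance}, independence of the Poisson process at distinct sites makes $W_\ell$ and $W_{\ell'}$ independent, so $\var[f_4] = G^{-2}\sum_\ell \var(W_\ell) \le G^{-2}\sum_\ell \E[W_\ell^2]$. The one-mutation-per-site calculation from the previous step gives $\E[W_\ell^2] = \sum_a \mu(\ell)(F_X(a,\ell)-F_Y(a,\ell))^2(F_U(a,\ell)-F_V(a,\ell))^2$. Now write each factor $(F_X-F_Y)(F_U-F_V)$ as an average over tuples $(x,y,u,v)$ of indicator products and apply Jensen's inequality to pull the square inside the average. Since indicator differences take values in $\{-1,0,1\}$, their squares equal their absolute values, and summing over $a$ produces $\mu(\ell)$ times the length of the intersection of the two gene-tree paths between $x,y$ and between $u,v$ at site $\ell$; the same four-leaf topology enumeration shows this intersection length equals $S_{(x,u);(y,v)}(\ell)+S_{(x,v);(y,u)}(\ell)$, giving \eqref{eqn:f4_variance} after summing over $\ell$.

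The main obstacle is the topological bookkeeping: identifying the signed combination of common-ancestry lengths with $S_{(x,u);(y,v)}-S_{(x,v);(y,u)}$ in part (II), and the unsigned path-intersection length with $S_{(x,u);(y,v)}+S_{(x,v);(y,u)}$ in the variance bound. Both reductions hinge on the same enumeration of the three four-leaf gene-tree topologies, which is what ties the ``branch between MRCAs'' definition of $S$ to the algebraic output of Poisson integration over the ARG.
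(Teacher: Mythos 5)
Your proposal is correct and follows essentially the same route as the paper: both reduce $f_4$ to a Poisson integral over ancestral genomes, identify the integrand as $\pm 1$ on the central branch of the two ``mixed'' four-leaf topologies (and $0$ on $((x,y),(u,v))$) to get the $S$-formulas, and use Jensen's inequality on the average over tuples for the variance bound. The only difference is presentational --- the paper routes through the path functions $\path{x}{y}$ and a remote-ancestor device to recover the $F_X$ form, whereas you obtain the $F_X$ form directly from $p_X(\ell)=F_X(a,\ell)$ and then expand into indicators --- but the underlying calculation is identical.
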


\begin{figure}
    \begin{center}
        \includegraphics[width=\textwidth]{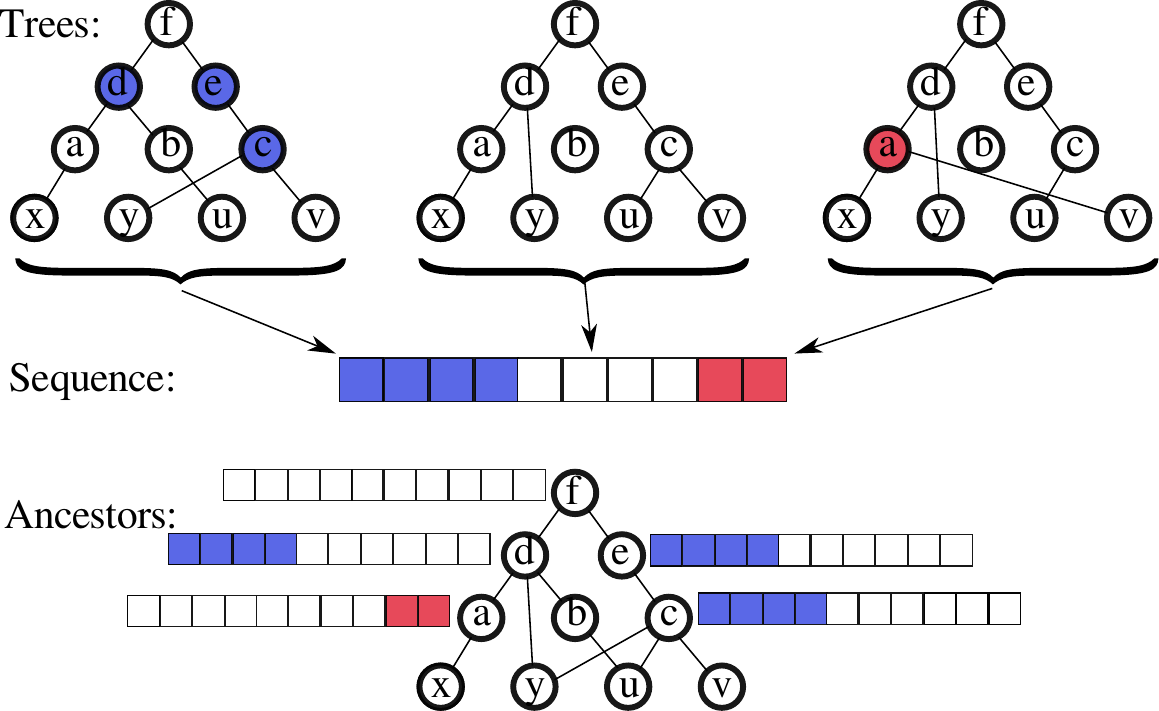}
    \end{center}
    \caption{
        Depiction of calculating expected $f_4(x,y;u,v)$ from the pedigree
        relating the four chromosomes $x$, $y$, $u$, and $v$.
        \textbf{Top} shows the three distinct trees found along the chromosome,
        and \textbf{middle} shows the locations on the 10-base-pair chromosome
        where those trees are found.
        Nodes in the trees are colored by their expected contribution to $f_4$ 
        -- any mutation occurring in blue nodes at that position will contribute ${}+\mu/10$ to 
        $f_4(x,y;u,v0$, while mutations occurring in red nodes would contribute ${}-\mu/10$,
        so that $\E[f_4(x,y;u,v)] = \mu (4 \times 3 + 0 - 2 \times 1) = 10\mu$.
        \textbf{Bottom} shows another way of computing this:
        individuals are provided with chromosomes labeled 
        according to how they contribute to $f_4(x,y;u,v)$.
        Instead of summing over trees, we can sum over ancestors, so that
        so that $\E[f_4(x,y;u,v)] = \mu (4 + 4 + 4 - 2) = 10\mu$.
        \label{fig:f4_ex}
    }
\end{figure}

Note that $F_X(a,\ell)$ is \emph{unknown} but quantifies the genetic inheritance of $X$,
and so equation \eqref{eqn:f4_expectation}
interprets $f_4(X,Y;U,V)$ as a descriptive statistic of demographic history.
Once formulated using the general-purpose theory,
the proof of this comes down to straightforward calculations with Poisson processes,
which are carried out below.

As a final example, the theory can be used to design a statistic,
to estimate the covariance in times to most recent common ancestor between two pairs of chromosomes.
This statistic has not to my knowledge been described in the literature.
Concretely, let $(x_1,x_2)$ and $(y_1,y_2)$ be two pairs of sampled chromosomes,
and for each position $\ell$ in the genome,
let $t_{x}(\ell)$ denote the total number of ancestors 
from whom either $x_1$ or $x_2$, but not both, have inherited at $\ell$,
and likewise for $t_y(\ell)$.
(So, $t_x(\ell)$ is the length of the tree, in number of meioses, between $x_1$ and $x_2$ at $\ell$.)
Suppose the genome is divided into nonoverlapping pieces $\{W_1, \ldots, W_n\}$,
and we wish to estimate covariances of mean values of $t_x$ and $t_y$, scaled by mutation rate,
across these chunks.
(The scaling by mutation rate could be removed if it was constant across sites, or known,
but I am leaving it in for realism.)
In other words, we want the covariance between 
$\overbar{\mu t}_{k,x} = \frac{1}{|W_k|} \sum_{\ell \in W_k} \mu_\ell t_x(\ell)$,
where $\mu_\ell$ is the mutation rate at site $\ell$,
and the corresponding quantity for $y$.
For convenience, also let $\overbar{\overbar{\mu t}}_x = (1/G) \sum_{\ell=1}^G \mu_\ell t_x(\ell)$ 
denote the mean time to common ancestor across the entire genome,
scaled by mutation rate;
and likewise for $\overbar{\mu t}_{k,y}$ and $\overbar{\overbar{\mu t}}_y$.

\begin{theorem}[Empirical covariance of mean coalescence times]
  \label{thm:mean_coal_cov}
  Let $N_k(x)$ denote the number of mutations that differentiate $x_1$ and $x_2$ in window $W_k$,
  $N_k(y)$ the same for $y_1$ and $y_2$,
    and let $N_k(x,y)$ denote the number of mutations that do both, 
  i.e., the number of mutations that differentiate between $x_1$ and $x_2$ as well as between $y_1$ and $y_2$.
  Define 
  $Z_{jk} = N_j(x) N_k(y) / |W_j| |W_k|$
  for $j \neq k$, and
    $Z_{jj} = (N_j(x) N_j(y) - N_j(x,y)) / |W_j|^2$. 
  The statistic
  \begin{align} \label{eqn:coal_cov_defn}
    C_{x,y} := \frac{1}{n} \sum_{k=1}^n \left( Z_{kk} - \frac{1}{n} \sum_{j=1}^n Z_{jk} \right) 
  \end{align}
  estimates the covariance in mean mutation-scaled coalescent times:
  under the infinite-sites model of mutation,
  \begin{align}
    \E\left[ C_{x,y} \right] = \frac{1}{n} \sum_{k=1}^n \left( \overbar{\mu t}_{k,x} - \overbar{\overbar{\mu t}}_x \right)\left( \overbar{\mu t}_{k,y} - \overbar{\overbar{\mu t}}_y \right),
  \end{align}
  with variance 
  \begin{align}
    \var\left[ C_{x,y} \right] \le \frac{ \epsilon^3 }{ n |W| },
  \end{align}
  if $\epsilon = \max_k \{ \max( \overbar{\mu t}_{k,x},\overbar{\mu t}_{k,y} )\}$ and $|W| = \max_k |W_k|$.
\end{theorem}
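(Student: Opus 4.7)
The plan is to condition on the ARG throughout, so that the only randomness comes from mutation. Under the infinite-sites model, mutations form an inhomogeneous Poisson point process on pairs (genomic site, ancestral branch) with intensity $\mu_\ell$ per unit branch length at site $\ell$. Four consequences drive the proof: (i) $N_k(x)$ is Poisson with mean $|W_k|\overbar{\mu t}_{k,x}$, and likewise $N_k(y)$ with mean $|W_k|\overbar{\mu t}_{k,y}$; (ii) $N_k(x,y)$ is Poisson with mean $\sum_{\ell \in W_k} \mu_\ell t_{xy}(\ell)$, where $t_{xy}(\ell)$ denotes the length of branches lying in both the $(x_1,x_2)$ and $(y_1,y_2)$ trees at $\ell$; (iii) mutations in disjoint windows are independent, so $N_j(\cdot)$ and $N_k(\cdot)$ are independent for $j \neq k$; and (iv) the Poisson identity $\cov(N(A),N(B)) = \E[N(A\cap B)]$ yields $\E[N_k(x)N_k(y)] = \E[N_k(x)]\E[N_k(y)] + \E[N_k(x,y)]$.

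For the expectation, fact (iii) gives $\E[Z_{jk}] = \overbar{\mu t}_{j,x}\,\overbar{\mu t}_{k,y}$ for $j \neq k$, and fact (iv) shows that the $-N_j(x,y)$ correction in the definition of $Z_{jj}$ exactly cancels the extra term in $\E[N_j(x)N_j(y)]$, so the same formula also holds on the diagonal. Substituting into \eqref{eqn:coal_cov_defn} and using $\frac{1}{n}\sum_j \overbar{\mu t}_{j,x} = \overbar{\overbar{\mu t}}_x$ gives
\begin{align*}
\E[C_{x,y}] = \frac{1}{n}\sum_k \overbar{\mu t}_{k,x}\,\overbar{\mu t}_{k,y} \;-\; \frac{1}{n}\sum_k \overbar{\overbar{\mu t}}_x\,\overbar{\mu t}_{k,y},
\end{align*}
which reassembles into the claimed centred covariance after adding the vanishing quantity $\overbar{\overbar{\mu t}}_y \cdot \frac{1}{n}\sum_k(\overbar{\mu t}_{k,x} - \overbar{\overbar{\mu t}}_x)$.

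For the variance, I would write $C_{x,y} = \frac{1}{n}\sum_k Z_{kk} - \frac{1}{n^2}\sum_{j,k} Z_{jk}$ and expand $\var(C_{x,y})$ as a double sum over the covariances of the $Z$'s. By (iii), $\cov(Z_{jk}, Z_{j'k'}) = 0$ whenever $\{j,k\} \cap \{j',k'\} = \emptyset$, so only index quadruples with at least one overlap survive. For each surviving pair I would compute the covariance using the Poisson moment identity $\E[N^2] = \lambda + \lambda^2$ together with (iv), then substitute $|W_k|\overbar{\mu t}_{k,\cdot} \leq |W|\epsilon$. The extra $\lambda$ in $\E[N^2]$ contributes one factor of $|W|\epsilon$ that, after dividing by the $|W_k|^2$ appearing in each $Z_{jk}$, turns an apparent $\epsilon^4|W|$ into $\epsilon^3/|W|$ per surviving covariance. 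The $O(n^3)$ overlapping index quadruples combine with the $1/n^4$ normalization from the outer averages to give the announced $\epsilon^3/(n|W|)$ bound.

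The main obstacle is the variance bookkeeping. The expectation is a few lines once (iv) is in hand, but controlling $\var(C_{x,y})$ requires careful case analysis: identifying which overlap patterns among $j$, $k$, $j'$, $k'$ produce nonzero covariances, tracking how the diagonal corrections $N_j(x,y)$ enter $\var(Z_{jj})$ and $\cov(Z_{jj}, Z_{jk})$, and verifying that fourth-order Poisson moments on the diagonal do not produce surviving terms exceeding the $\epsilon^3$ order. Once each class is enumerated and bounded by $O(\epsilon^3/|W|)$ times the correct count of index configurations, the pieces sum to the stated rate.
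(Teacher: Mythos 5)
Your proposal is correct and follows essentially the same route as the paper: your Poisson facts (i)--(iv) are exactly the paper's moment lemmas for $\int \phi\, d\muts$ specialized to path indicator functions, your cancellation of the $N_j(x,y)$ correction against $\cov(N_j(x),N_j(y))$ is the paper's identity $\E[M(\psi_{j,x})M(\psi_{j,y})-M(\psi_{j,x}\psi_{j,y})]=\mu(\psi_{j,x})\mu(\psi_{j,y})$ (and you correctly read the theorem's $N_j(x)^2$ as the evidently intended $N_j(x)N_j(y)$), and your variance plan --- discard covariances of $Z$'s with disjoint index pairs, bound each survivor by $O(\epsilon^3/|W|)$ via the Poisson variance-equals-mean identity, and count $O(n^3)$ survivors against the $1/n^4$ normalization --- is precisely the term-by-term bookkeeping the paper carries out in its appendix, ending in the same crude bound. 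The only difference is that you leave that bookkeeping as an outline rather than executing it.
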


Now, I use the theory of Poisson processes
to generalize these examples.

\section*{A formal calculus of mutations on the ARG}

\paragraph{Genealogy}
Ambitiously,
the object of study is the collection of genetic relationships between every individual ever in the population, past and present.
To simplify the discussion of relationships between diploids,
consider relationships on only a single chromosome,
and let $\anc$ denote the collection of all copies of this chromosome in any individual alive at any point 
back to some arbitrarily distant time.
The \emph{population pedigree}, denoted $\pedigree$, 
is the directed acyclic graph with vertex set $\anc$
that has an edge $a \to b$ if $a$ is a parent of $b$,
so that each vertex in $\anc$ has in-degree equal to two.
(Although this is a different object than the pedigree relating the diploid individuals,
rather than their constituent chromosomes,
the two are simple transformations of each other.)
Under the standing assumption of the infinite sites model, 
no two mutations fall in the same location,
and so locations on the chromosome are indexed by the continuous interval $\sites = [0,G]$,
where $G$ is the total length of the chromosome.
At each site in $\sites$ all ancestral chromosomes are related by a tree,
embedded in the population pedigree,
and this collection of trees is equivalent to the ARG.

\paragraph{Mutation}
Under the infinite sites model, the mutation process is Poisson:
concretely, 
the sites at which ancestral chromosome $a \in \anc$ differs from their parent (i.e., has a mutation) 
are distributed as the points of an inhomogeneous Poisson process on $\sites$.
This Poisson process of mutations has intensity measure given by the function $\mu$, 
where $\mu(\ell)$ is the mutation rate at $\ell$.
To study the inheritance of mutations through the pedigree,
these are next collected into a Poisson process on all ancestral chromosomes:
formally, $\muts$ 
is the point measure on $\anc \times \sites$ formed by putting one point mass at each mutation.
Equivalently, $\muts$ counts mutations:
for any region of the chromosome $[i,j] \subset \sites$
and any set of ancestral chromosomes $A \subset \ancs$,
the value $\muts(A \times [i,j])$ is the total number of mutations that occurred in the region $[i,j]$ 
on any of the chromosomes in $A$.

Under the assumption that mutation occurs independently of the pedigree,
the expected number of mutations occurring in any collection of pieces of ancestral chromosomes
is simply the sum of $\mu(\ell)$ across these pieces.
Formally,
$\muts$'s mean measure has density $\mu$ with respect to $n \otimes d\ell$,
where $n$ is counting measure on $\anc$ and $d\ell$ is Lesbegue measure on the chromosome.
Below, I will abuse notation slightly and write $d\mu$ for this measure,
so that for any (measurable) subset of ancestral genomes $S \subset \ancs \times \sites$,
\begin{align}
  \E[ \muts(S) ] = \int_S d\mu(a,\ell)  .
\end{align}

The key simplifying assumption of the infinite sites model is that every mutation is \emph{observable}:
if two ancestral chromosomes $a$ and $b$ have both inherited a region of genome $[i,j]$ 
from their most recent common ancestor on this region,
then given the genomes of $a$ and $b$ on this segment, 
we also know the number of mutations that have occurred
in this segment in any of the ancestors going back to the common ancestor.

\paragraph{Additive statistics}
The formulation of mutations as a point measure on ancestral genomes
is designed to make a certain set of statistics easy to formulate and analyze,
namely, those that can be formed using
integrals of test functions against the mutation process.
Recall that since $\muts$ is a point measure on the set of all ancestral genomes, $\anc \times \sites$,
with unit mass at each ancestral mutation, 
then if $\chi$ is a function on $\anc \times \sites$, the integral of $\chi$ against $\muts$
is the sum of the values of $\chi$ at the locations of all mutations:
defining $\{ m_{a,k} \st 1 \le k \le n_a \}$ to be the locations of the mutations that occurred on ancestral chromosome $a$,
\begin{align}
  \int \chi(a,\ell) d\muts(a,\ell) = \sum_{a \in \anc} \sum_{k=1}^{n_a} \chi(a,m_{a,k}) .
\end{align}
It is easier to write this as simply $\int \chi d\muts$.
Here is a simple example (for which all this notation is unnecessary):

\begin{example}[Number of mutations]
  Given a sample of chromosomes $A \subset \anc$ let $\zeta_A(a,\ell) = 1$ if $a \in A$ and $\zeta_A(a,\ell) = 0$ otherwise.
  Then $\int \zeta_A d\muts$ counts the number of mutations appearing \emph{de novo} in the sampled chromosomes.
\end{example}

This is a simple prototype for what we would like to do more generally.
But, notice that $\int \zeta_A d\muts$ is unobservable given only the genomes in $A$ -- 
to identify new mutations, we need the parental genomes as well.
Statistics of this form work by 
first, based on the samples, constructing a function $\chi$ on ancestral genomes,
and then adding up the value that $\chi$ takes at the location of each mutation on any ancestral genome.
To be observable,
such statistics will need to satisfy certain conditions:
in particular, 
$\chi$ must be zero for any piece of ancestral genome not inherited by any sampled genome, or inherited by all of them.
This idea of observability under the infinite sites model
is encoded by the following set of special functions:

\begin{definition}[Paths]
  \textbf{(I)}
      For any two chromosomes $x$ and $y$ in $\anc$,
      the \emph{path from $x$ to $y$} is the function
      \begin{align}
        \path{x}{y}(a,\ell) = 
        \begin{cases}
          1 \qquad & \text{if $x$ or $y$, but not both, inherits from $a$ at $\ell$} \\
          0 \qquad & \text{otherwise} .
        \end{cases}
      \end{align}
  \textbf{(II)}
      For any two collections of chromosomes $X$ and $Y$,
      having $|X|$ and $|Y|$ elements each,
      the \emph{path from $X$ to $Y$} is the function
      \begin{align}
        \path{X}{Y}(a,\ell) = \frac{1}{C(X,Y)} \sum_{x,y} \path{x}{y}(a,\ell) ,
      \end{align}
      where the sum is over distinct pairs $x \in X$ and $y \in Y$ with $x\neq y$,
      and $C(X,Y)$ is the number of such pairs.
    These are depicted in figure \ref{fig:example_paths}.
\end{definition}

\begin{figure}
    \begin{center}
        \includegraphics[width=\textwidth]{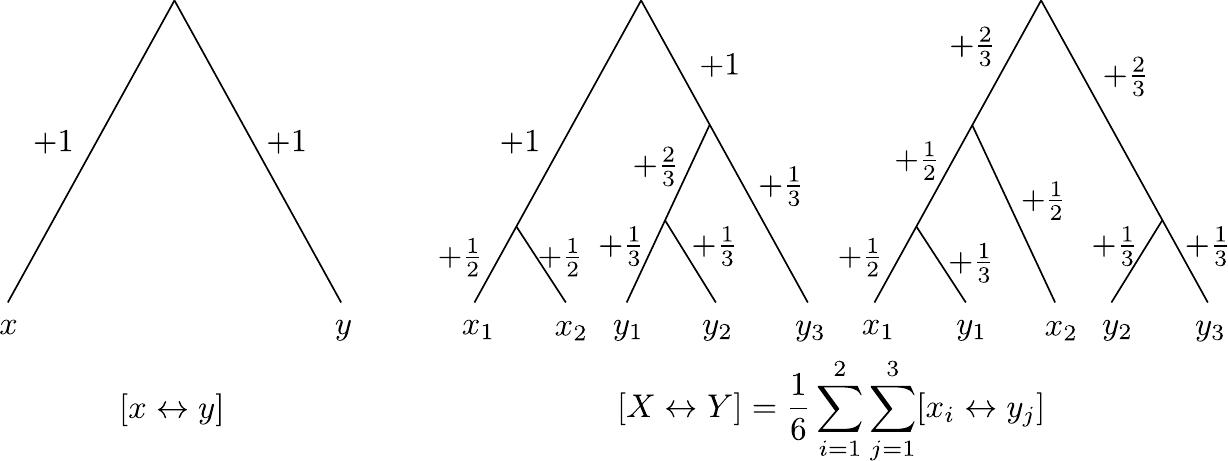}
    \end{center}
    \caption{
        A cartoon of values taken by the \emph{path} between \textbf{(left)} two sampled chromosomes $x$ and $y$,
        and \textbf{(right)} two groups of samples, $X=\{x_1,x_2\}$, and $Y=\{y_1,y_2,y_3\}$.
        The path between two single chromosomes is, marginally at each locus,
        the indicator of the path from each back to their common ancestor;
        but between larger groups the values it takes depends on the local topology
        of the tree relating the samples.
        \label{fig:example_paths}
    }
\end{figure}

In other words, the path $\path{x}{y}$ is the indicator function of the pieces of ancestral chromosomes
that are inherited by $x$ or $y$, but not both.
At each locus $\ell$, the function $\path{x}{y}(\cdot,\ell)$ on $\anc$ is the indicator function of 
the tree connecting $x$ and $y$ at $\ell$
(but notably excluding the root).
$\path{X}{Y}$ is the average of these indicators,
which gives the proportion of paths between pairs of samples from the two groups
that pass through each ancestor in the pedigree.


\begin{example}[Sequence divergence]
  The \emph{sequence divergence} between two chromosomes $x$ and $y$, denoted $\pi(x,y)$, 
  is the mean density of distinguishing mutations;
  it is
  \begin{align}
    \pi(x,y) = \frac{1}{G} \int \path{x}{y} d\muts .
  \end{align}
  For a group of samples $X$ we define the average pairwise divergence similarly:
  \begin{align}
    \pi(X) 
      &= \frac{1}{|X|(|X|-1)} \sum_{x_1\neq x_2 \in X} \pi(x_1,x_2) \\
      &= \frac{1}{G} \int \left( \path{X}{X} \right) d\muts .
  \end{align}
\end{example}

\citet{tajima1983evolutionary} proposed $\pi(X)$ 
as an estimator for $\theta = 4 N_e \mu$,
under the model of a large, randomly mating population of diploids with constant mutation rate $\mu$
and effective population size $N_e$.
Indeed, $\E[\pi(X)] = \theta$, 
but the variance of $\pi(X)$ for a nonrecombining locus does not go to zero as the number of samples grows,
and so this is regarded as a poor estimator of $\theta$.
However, it is clear that 
the variance does go to zero if calculated with the gene tree fixed,
i.e., thought of as an estimator of the mean time to most recent common ancestor
(multiplied by twice the mutation rate).
Denoting by $\overbar{\mu t}$ the empirical mean mutation-rate-scaled time to most recent common ancestor,
the variance can be partitioned (as in \citet{edwards2000perspective}):
$\var[\pi(X)] = \E[ \var[\pi(X) \given \overbar{\mu t} ] ] + \var[ \E[ \pi(X) \given \overbar{\mu t} ] ]$.
The first term here goes to zero as the size of the sample increases,
but the second, due to randomness in the demographic process, does not.
The question of whether $\pi(X)$ is a good estimator, and for what,
becomes a question of whether we want to think of it as estimating a summary statistic of actual relationships
or a parameter in a certain stochastic model of demography.

\paragraph{Observable statistics}
For application to data, it only makes sense 
to consider statistics that can be calculated using a sampled set of genomes. 
As discussed above, under the infinite sites model
this means that the genome sequence of a pair of samples $x$ and $y$ on subset of the genome $L$
is determined by the number of mutations inherited by $x$ or $y$ but not both on $L$,
or $\sum_{a \in \anc} \sum_{\ell \in L} \path{x}{y}(a,\ell)$.
If $\one_L(a,\ell)=1$ for $\ell \in L$ and is zero otherwise,
this is $M(\one_L \path{x}{y})$,
so the set of observable statistics, given a collection of samples $A$,
is generated by the set of linear combinations of functions of this form,
i.e., products of indicators of a segment of genome with a path between two samples.
It is clear that we cannot learn about parts of the pedigree from which no samples have inherited,
or about anything occurring longer ago than the most recent common ancestor of the samples at each site,
but it is not clear how to formulate more generally what information about the ARG
is or is not obtainable from finite samples of chromosomes.

This notion of observability is made formal by the following definition:
\begin{definition}[Observable statistics]
  A function $\chi(a,\ell)$ on $\anc \times \sites$ is \emph{observable} given a collection of chromosomes $A$
  if it is in the algebra 
  generated by functions of the form
  $\one_{[u,v]}(a,\ell) \path{x}{y}(a,\ell)$, 
  where $[u,v]$ is an interval of the chromosome,
  and $x$ and $y$ are samples in $A$.
\end{definition}

There are further, unavoidable, limitations to what it is possible to learn from the data.
For instance, the sequence divergence $\pi(x,y)$ between $x$ and $y$
is an estimate of the mean of the empirical distribution of times to most recent common ancestors (TMRCA)
between $x$ and $y$,
multiplied by the mutation rate.
Theorem \ref{thm:mean_coal_cov} gives an estimator of the variance of the \emph{mean} TMRCA in windows.
However, it is not possible to estimate the variance of the distribution of \emph{single-site} TMRCA values
without somehow modeling dependencies between sites induced by recombination.
An easy way to see this is if each site has at most one mutation,
then without using inter-site dependencies,
the model is equivalent to one where the probability of a segregating mutation is \emph{constant},
equal to the mean.
To make further progress,
it is necessary (and reasonable) to make further assumptions,
that are beyond the scope of this paper.

\subsection*{Moments}

Calculations with statistics in this formalism can be made with the help of
the following general formula for integrals of test functions
against a Poisson processes:

\begin{lemma}{[Generating function]}
For any test function $\phi: \anc \times \sites \to \R$
for which $\int \phi d\mu$ is absolutely convergent,
the statistic $\int \phi d\muts$ is well-defined, 
and 
\begin{align} \label{eqn:campbells_thm}
    \E\left[ \exp\left( \int \phi d\muts \right) \right] = 
        \exp\left( \int \left( e^{\phi} - 1 \right) d\mu \right) .
\end{align}
\end{lemma}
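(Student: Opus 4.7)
The plan is to recognize this as the standard Campbell-type generating functional identity for a Poisson point process, and to prove it by the usual route: verify it first for simple (piecewise constant) functions using the Poisson moment generating function and independence over disjoint sets, then extend to general integrable $\phi$ by approximation.

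First I would handle the simple case. Suppose $\phi = \sum_{i=1}^k c_i \one_{A_i}$ where $A_1,\ldots,A_k$ are disjoint measurable subsets of $\anc \times \sites$ with $\mu(A_i) < \infty$. Then $\int \phi\, d\muts = \sum_i c_i \muts(A_i)$. By the defining properties of a Poisson point process with mean measure $d\mu$, the counts $\muts(A_i)$ are independent and each is Poisson distributed with mean $\mu(A_i)$. Using the elementary MGF of a Poisson variable, $\E[\exp(c\, \mathrm{Pois}(\lambda))] = \exp(\lambda(e^c - 1))$, and independence gives
\begin{align*}
  \E\!\left[ \exp\!\left( \int \phi\, d\muts \right) \right]
    = \prod_{i=1}^k \exp\!\big( \mu(A_i)(e^{c_i} - 1) \big)
    = \exp\!\left( \int (e^{\phi} - 1)\, d\mu \right),
\end{align*}
where the last equality uses $e^\phi - 1 = \sum_i (e^{c_i} - 1) \one_{A_i}$ off the complement of $\bigcup A_i$ and vanishes on it.

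Next I would extend to general $\phi$. For $\phi \geq 0$, take an increasing sequence of nonnegative simple functions $\phi_n \uparrow \phi$; then $\int \phi_n\, d\muts \uparrow \int \phi\, d\muts$ almost surely by monotone convergence against the point measure $\muts$, and similarly $\int (e^{\phi_n} - 1)\, d\mu \uparrow \int (e^\phi - 1)\, d\mu$. Applying monotone convergence to both sides of the identity established for $\phi_n$ yields the result (allowing the value $+\infty$ on both sides simultaneously). For $\phi \leq 0$, a symmetric argument with $\phi_n \downarrow \phi$ together with dominated convergence (using the bound $|e^{\phi_n} - 1| \leq 1$ and $|e^{\phi_n}| \leq 1$) gives the identity. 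For general signed $\phi$ with $\int |\phi|\, d\mu < \infty$, split $\phi = \phi^+ - \phi^-$ and approximate by truncations $\phi_n = \phi \one_{\{|\phi| \leq n\}}$ supported on sets of finite $\mu$-mass; the absolute convergence hypothesis guarantees that $\int \phi\, d\muts$ is almost surely an absolutely convergent sum (since $\E \int |\phi|\, d\muts = \int |\phi|\, d\mu < \infty$), so the random variable is well-defined, and dominated convergence applied to both sides closes the argument.

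The main obstacle is not any single step but the bookkeeping in the extension: one must verify that $\int \phi\, d\muts$ makes sense as a random variable (which is where the absolute convergence assumption on $\int \phi\, d\mu$ is used), and one must be careful passing to the limit in $\E[\exp(\int \phi_n\, d\muts)]$ when $\phi$ can be both positive and negative, since neither monotone nor dominated convergence applies directly to $e^{\int \phi\, d\muts}$ without a uniform bound. The cleanest route is to prove the identity first for bounded $\phi$ of compact $\mu$-support (where both limits are routine) and then use the fact that for $\int |\phi|\, d\mu < \infty$, the random variable $\int \phi\, d\muts$ is an almost sure limit of the truncated integrals, together with Fatou's lemma on both sides to upgrade to equality.
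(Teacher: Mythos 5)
Your argument is correct in outline, but it takes a genuinely different route from the paper only in the sense that the paper does not prove this lemma at all: its ``proof'' is a one-line citation of Campbell's Theorem (Kingman), plus the remark that ``well-defined'' means the sum $\int \phi \, d\muts$ is absolutely convergent. What you have written is essentially the standard textbook proof behind that citation --- simple functions via the Poisson moment generating function and independence over disjoint sets, then a limiting argument --- so you are supplying the content the paper delegates to a reference. Two small points of care in your extension step. First, for $\phi \le 0$ the dominating bound $|e^{\phi_n}-1| \le 1$ is not by itself integrable, since $\mu$ here is counting measure on $\anc$ tensored with Lebesgue measure on $\sites$ and is far from finite; you want $|e^{\phi_n}-1| \le |\phi_n| \le |\phi|$, which is integrable by hypothesis. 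Second, for signed $\phi$ your truncations $\phi\,\one_{\{|\phi|\le n\}}$ need not have finite $\mu$-support (only $\{|\phi| > \varepsilon\}$ is guaranteed finite mass); and the cleanest way to close the signed case is to note that $\int \phi^{+}d\muts$ and $\int \phi^{-}d\muts$ are integrals over disjoint supports, hence independent, so $\E[\exp(\int\phi\,d\muts)] = \E[\exp(\int\phi^{+}d\muts)]\,\E[\exp(-\int\phi^{-}d\muts)]$ with the second factor in $(0,1]$; this reduces everything to the two one-signed cases you already handled and correctly propagates the possibility that both sides equal $+\infty$. With those repairs the proof is complete and matches what one finds in the cited source.
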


\begin{proof}
    This is Campbell's Theorem; see e.g., \citet{kingman-poisson-processes}.
    Recall that $\int \phi d\muts$ is the sum of the value that $\phi$ takes
    over all mutations occurring in $\muts$;
    ``well-defined'' means that this sum is absolutely convergent
    even if there are an infinite number of mutations.
\end{proof}

We immediately get, by differentiating this formula, the moments:

\begin{lemma}[Mean and covariance]
  \label{thm:mean_and_cov}
    Let $\phi$ and $\psi$ be test functions on $\anc \times \sites$
    for which $\int \phi d\mu$, $\int \psi d\mu$, and $\int \phi \psi d\mu$
    are all absolutely convergent.
    Then
    \begin{align}
        \E\left[ \int \phi d\muts \right] &= \int \phi d\mu \\
        \cov\left[ \int \phi d\muts,  \int \psi d\muts \right] &= \int \phi \psi d\mu .
    \end{align}
\end{lemma}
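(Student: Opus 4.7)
The plan is to differentiate the generating function identity \eqref{eqn:campbells_thm} in a parameter, just as the lemma statement suggests, but with care to justify the interchange of expectation and derivative.

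First I would replace $\phi$ in \eqref{eqn:campbells_thm} by $s\phi + t\psi$, giving
\begin{align*}
F(s,t) := \E\!\left[\exp\!\left(\int (s\phi + t\psi)\, d\muts\right)\right]
 = \exp\!\left(\int (e^{s\phi+t\psi}-1)\, d\mu\right),
\end{align*}
which is well-defined for $(s,t)$ in a neighborhood of the origin under the absolute convergence hypothesis. The mean formula would come from $\partial_s F(0,0)$: the right-hand side yields $\int \phi\, d\mu$, while (granting the interchange) the left-hand side yields $\E[\int \phi\, d\muts]$. Similarly, the mixed second derivative $\partial_s\partial_t F(0,0)$ would give, on the right, $\int \phi\psi\, d\mu + (\int\phi\, d\mu)(\int\psi\, d\mu)$, and on the left $\E[(\int\phi\, d\muts)(\int\psi\, d\muts)]$. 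Subtracting the product of means then delivers the covariance identity.

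The main obstacle is justifying the exchange of $\partial_s$ and $\partial_t$ with the expectation. I would handle this by first proving the lemma for bounded, compactly supported $\phi$ and $\psi$, where the integrals against $\muts$ are almost surely finite sums and the interior $|s|,|t| < \delta$ makes $\exp(\int(s\phi+t\psi)d\muts)$ uniformly integrable via a dominating bound (e.g., $\exp(\delta(\|\phi\|_\infty + \|\psi\|_\infty) \muts(\mathrm{supp}(\phi)\cup\mathrm{supp}(\psi)))$, whose expectation is finite by \eqref{eqn:campbells_thm}). Standard monotone/dominated-convergence approximation by simple functions then extends both formulas to general $\phi,\psi$ satisfying the stated absolute convergence.

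An alternative route, which I would mention as a cross-check, is to work directly with the defining property of a Poisson random measure: for disjoint measurable sets $A,B \subset \anc\times\sites$, $\muts(A)$ and $\muts(B)$ are independent Poisson variables with means $\mu(A),\mu(B)$, so $\E[\muts(A)] = \mu(A)$ and $\cov(\muts(A),\muts(B)) = \mu(A\cap B)$ (using $\var(\muts(A)) = \mu(A)$ when $A=B$). Linearity extends this to simple functions, and the same approximation argument extends it to the general test functions in the hypothesis; this recovers exactly the two displayed identities without differentiating the generating function. I expect the first route to be the cleanest, with the domination argument for differentiation under the expectation being the only substantive step.
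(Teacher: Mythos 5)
Your proposal is correct and takes essentially the same route as the paper: both substitute $s\phi + t\psi$ into Campbell's theorem \eqref{eqn:campbells_thm} and differentiate at the origin, the only cosmetic difference being that the paper differentiates $\log \E[F]$ to read off the covariance directly while you differentiate $\E[F]$ and subtract the product of means. Your added justification of the derivative--expectation interchange (and the cross-check via disjoint sets and independence of the Poisson measure) supplies detail the paper omits as ``standard,'' but it is the same argument.
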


\begin{proof}
  These are a standard calculations, but we carry out the second for covariance for concreteness.
  Let $F(a,b) = \exp\left( \int (a \phi + b \psi) d\muts \right)$.
  Using \eqref{eqn:campbells_thm},
  \begin{align}
    \partial_a \partial_b \log \E\left[ F(a,b) \right] \big \vert_{a=b=0}
    &= \int \phi \psi d\mu 
  \end{align}
  On the other hand, exchanging the expectation and the derivatives,
  \begin{align}
    \partial_a \partial_b \log \E\left[ F(a,b) \right] \big \vert_{a=b=0}
    &= \frac{ \E\left[ \partial_a \partial_b F(a,b) \right] - \E\left[ \partial_a F(a,b) \right] \E\left[ \partial_b F(a,b) \right] }{ \E[ F(a,b)^2 ] } \bigg \vert_{a=b=0} \\
    &= \E\left[ \int \phi \psi d\muts \right] - \E\left[ \int \phi d\muts \right] \E\left[ \int \psi d\muts \right] .
  \end{align}
\end{proof}

Expressions for more general moments,
or statistics whose expected values match desired quantities,
can be found using the following general formula:

\begin{lemma}{[General Moments]}
    \label{thm:moments}
  Let $\mathcal{S}$ be the set of upper triangular $n \times n$ matrices with entries in $\{0,1\}$ and whose columns sum to 1
  (i.e., ways of partitioning the set $\{1,2,\ldots,n\}$ into sets),
  and for $\sigma \in \mathcal{S}$ let $|\sigma|$ denote the number of nonzero rows of $\sigma$
  (i.e., the number of sets in the partition).
  Then for any collection of test functions $\phi_1, \ldots, \phi_n$ on $\anc \times \sites$
  for which the following integrals against $\mu$ are absolutely convergent,
  \begin{align}
    \E\left[ \prod_{i=1}^n \int \phi_i d\muts \right] &= \sum_{\sigma\in\mathcal{S}} \prod_i \int \left( \prod_j \phi_j^{\sigma_{ij}} \right) d\mu 
  \end{align}
  and
  \begin{align}
    \int \prod_{i=1}^n \phi_i d\mu &= \sum_{\sigma\in\mathcal{S}} (-1)^{1+|\sigma|} \E\left[ \prod_i \int \left( \prod_j \phi_j^{\sigma_{ij}} \right) d\muts \right]  .
  \end{align}
\end{lemma}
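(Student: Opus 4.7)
The plan is to derive both identities by differentiating the generating function identity from the previous lemma, applied to a linear combination of the test functions. Setting $\Phi(a) = \sum_{i=1}^n a_i \phi_i$ with parameters $a = (a_1,\ldots,a_n)$ and substituting $\Phi(a)$ for $\phi$ in \eqref{eqn:campbells_thm} gives
\begin{align}
\E\left[\exp\left(\sum_i a_i \int \phi_i\, d\muts\right)\right] = \exp\left(\int (e^{\Phi(a)} - 1)\, d\mu\right),
\end{align}
so any mixed moment can be extracted by taking the partial derivative $\partial_{a_1}\cdots\partial_{a_n}$ and setting $a = 0$.

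For the first identity, I would compute the $n$-fold partial of both sides at $a = 0$. The left-hand side becomes $\E[\prod_i \int \phi_i\, d\muts]$ after exchanging derivative and expectation (justified by the absolute integrability hypothesis). For the right-hand side, write the expression as $\exp(G(a))$ with $G(a) = \int (e^{\Phi(a)} - 1)\, d\mu$ and apply the multivariate Fa\`a di Bruno formula to the composition $\exp \circ G$: the result is a sum indexed by set partitions $\pi$ of $\{1,\ldots,n\}$, each term being $\prod_{B \in \pi}\bigl(\prod_{j\in B}\partial_{a_j}\bigr) G(a)\bigr|_{a=0}$. A direct differentiation under the integral gives $\bigl(\prod_{j\in B}\partial_{a_j}\bigr) G(a)\bigr|_{a=0} = \int \prod_{j\in B}\phi_j\, d\mu$. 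Identifying each set partition $\pi$ with the unique upper triangular $0/1$ matrix $\sigma\in\mathcal{S}$ whose $j$th column has its single nonzero entry in the row indexed by the smallest element of $j$'s block puts the sum into the claimed form, with nonzero rows of $\sigma$ in bijection with blocks.

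For the second identity, I would view the first as the classical moment–cumulant relation on the partition lattice $\Pi_n$: writing $\kappa_B := \int \prod_{j\in B}\phi_j\, d\mu$, the first identity reads $\E[\prod_i \int \phi_i\, d\muts] = \sum_\pi \prod_{B\in\pi} \kappa_B$. Möbius inversion on $\Pi_n$ then expresses the top cumulant $\kappa_{\{1,\ldots,n\}} = \int \prod_i \phi_i\, d\mu$ as an alternating sum over partitions $\pi$ of $\prod_{B\in\pi} \E[\int \prod_{j\in B}\phi_j\, d\muts]$, weighted by the Möbius function on $\Pi_n$. Translating partitions back to matrices $\sigma\in\mathcal{S}$ yields the stated form.

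The main obstacle is the combinatorial bookkeeping in the Fa\`a di Bruno step: tracking which products of partials of $\Phi(a)$ are produced when differentiating $\exp(G(a))$, and verifying that the upper triangular matrix parameterization enumerates set partitions exactly once. Once the first identity is recognized as a moment–cumulant relation, the inversion is a standard application of Möbius theory on the partition lattice; the exchange of derivatives with integrals relies only on the absolute integrability hypothesis.
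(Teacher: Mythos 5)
Your derivation of the first identity is correct and is essentially the paper's own argument: the paper differentiates the generating function of $\sum_i \alpha_i \phi_i$ ``by induction,'' and the multivariate Fa\`a di Bruno expansion of $\partial_{a_1}\cdots\partial_{a_n}\exp\bigl(\int(e^{\Phi(a)}-1)\,d\mu\bigr)$ at $a=0$ is exactly the closed form of that induction, with the matrices in $\mathcal{S}$ enumerating set partitions as you describe.

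The second identity is where the proposal has a genuine gap. The M\"obius inversion you invoke is the classical moment--cumulant relation, and what it actually produces is
\begin{align*}
  \int \prod_{i=1}^n \phi_i\, d\mu \;=\; \sum_{\pi\in\Pi_n} (-1)^{|\pi|-1}(|\pi|-1)!\; \prod_{B\in\pi} \E\Bigl[\,\textstyle\prod_{j\in B} \int \phi_j\, d\muts\Bigr],
\end{align*}
whose terms are \emph{products of expectations} of products of the individual statistics $\int\phi_j\,d\muts$, weighted by the partition-lattice M\"obius function $(-1)^{|\pi|-1}(|\pi|-1)!$. The lemma's display instead has a single expectation of a product of \emph{block statistics} $\int\prod_{j\in B}\phi_j\,d\muts$, with weights $(-1)^{1+|\sigma|}$ carrying no factorials. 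These are different expressions, and your closing step ``translating partitions back to matrices yields the stated form'' is asserted but false: already for $n=2$ your identity reads $\mu(\phi_1\phi_2)=\E[M(\phi_1)M(\phi_2)]-\E[M(\phi_1)]\E[M(\phi_2)]$, whereas the lemma's right-hand side evaluates to $\E[M(\phi_1\phi_2)]-\E[M(\phi_1)M(\phi_2)]=-\mu(\phi_1)\mu(\phi_2)$, which does not equal $\mu(\phi_1\phi_2)$. If you instead invert the \emph{first} identity over the refinement order --- using that $\E[\prod_{B\in\pi}M(\phi_B)]=\sum_{\rho}\prod_{C\in\rho}\mu(\phi_C)$, the sum over partitions $\rho$ coarser than $\pi$ --- you get $\prod_i\int\phi_i\,d\mu=\sum_{\sigma}\bigl(\prod_{B}(-1)^{|B|-1}(|B|-1)!\bigr)\E\bigl[\prod_i\int(\prod_j\phi_j^{\sigma_{ij}})\,d\muts\bigr]$, which has the lemma's term structure but a product-of-integrals left-hand side and factorial coefficients; it agrees with the printed display only for $n\le 2$, which is the only case the paper actually uses (namely $\mu(\phi_1)\mu(\phi_2)=\E[M(\phi_1)M(\phi_2)-M(\phi_1\phi_2)]$). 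So the failure is partly in your final translation step and partly in the statement itself, whose second display does not hold as written for $n=2$; you should either prove the corrected identity just given, or restrict the claim to $n\le 2$.
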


\emph{Note:} methods for efficient computation of mixed moments
using orthogonal polynomials
are presented in a general framework by \citet{peccati2011facts}.

\begin{proof}
    By induction, by differentiating either the generating function (for the first formula)
    or the log of the generating function (for the second) of $\sum_i \alpha_i \phi_i$.
\end{proof}


\section*{Application of the method}

We can now use the formalism developed above
to easily prove the theorems given in the Introduction.

\begin{proof}[Proof of Theorem~\ref{thm:f_4}]

  Write $G_{x\ell}$ for the genotype of haplotype $x$ at position $\ell$,
  coded as `0' for the reference allele and `1' for the alternate allele.
  First note that
  \begin{align}
    f_4(X,Y;U,V) = \frac{1}{G} \sum_{\ell=1}^G \frac{1}{n_X n_Y n_U n_V} \sum_{x,y,u,v} (G_{x\ell} - G_{y\ell}) (G_{u\ell} - G_{v\ell}) ,
  \end{align}
  i.e., $f_4$ is the average product of differences between pairs of haplotypes chosen from $(X,Y)$ and from $(U,V)$,
  averaging over site in the genome and choices of pairs $(x,y)$ and $(u,v)$.
  For a given quadruple $(x,y,u,v)$,
  sites will add to this sum if there is a mutation shared by $x$ and $u$
  that $y$ and $v$ do not carry, or vice versa; 
  and sites will subtract from the sum if there is a mutation that is shared by $x$ and $v$ that $y$ and $u$ do not carry,
  or vice versa.
  The theorem for $f_4(x,y;u,v)$ then follows easily by writing the statistic
  as the difference of two Poisson random variables, divided by $G$.
  To carry this through using the notation here,
  note that the function $\path{x}{v} \path{y}{u} - \path{x}{u} \path{y}{v}$
  takes the value $+1$ on the internal branch of trees with unrooted topology $((x,u),(y,v))$,
  $-1$ on the internal branch of trees with unrooted topology $((x,v),(y,u))$,
  and is zero on the remaining topology $((x,y),(u,v))$;
  therefore, 
  \begin{align}
    f_4(X,Y;U,V) = \frac{1}{G} M \left( 
        \path{X}{V} \path{Y}{U} - \path{X}{U} \path{Y}{V}
    \right) .
  \end{align}
  The expected value of $f_4$ is therefore just the integral of 
   $\path{X}{V} \path{Y}{U} - \path{X}{U} \path{Y}{V}$
  against $\mu$;
  using linearity this can be rewritten in various ways
  to give the two interpretations given in the theorem.
  Since $S_{xu|yv}(\ell)$ is equal to the length of the internal branch of the tree at $\ell$
  if it has unrooted topology $((x,u),(y,v))$ and is zero otherwise,
  \begin{align} \label{eqn:disagree}
      \mu\left(\one_{\{\ell\}} \path{x}{v} \path{y}{u} - \path{x}{u} \path_{y}{v}\right) 
      = \mu(\ell) \left( S_{xu|yv}(\ell) - S_{xv|yu}(\ell) \right),
  \end{align}
  and so expression \eqref{eqn:f4_expectation_branches} follows immediately.
  (An alternative path would be to use
  the four point condition \citep{semple2003phylogenetics},
  which implies here that
  $\path{x}{v} \path{y}{u} - \path{x}{u} \path{y}{v}
    = (\path{x}{v} + \path{y}{u} - \path{x}{u} - \path{y}{v})/2$.) 

  For the first interpretation,
  define $\star$ to be \emph{all} ancestral genomes alive at some fixed point in the remote past 
  (longer ago than the maximum time to most recent common ancestor across the genome),
    and note that 
    $\path{x}{u} = |\star| \path{\star}{x} + |\star| \path{\star}{u} - 2 |\star|^2 \path{\star}{x} \path{\star}{u}$. 
  (The factors of $|\star|$ are only to cancel the denominator in the definition of the path function.)
  Therefore, 
  $\path{x}{v} \path{y}{u} - \path{x}{u} \path{y}{v} 
  = |\star|^2 
  ( \path{\star}{x} - \path{\star}{y} )
  ( \path{\star}{u} - \path{\star}{v} )$.
  Since $\mu( |\star| \path{\star}{x} )(a,\ell)$ is equal to 1 if $a$ is more recent 
  than the age of the ancestors in $\star$ 
  and $x$ has inherited from $a$ at $\ell$,
  averaging over choices of $x$ gives the function $F_X$ defined in the theorem:
  $|\star| \mu( \path{\star}{X} )(a,\ell) = \mu(\ell) F_X(a,\ell)$,
  so that equation~\eqref{eqn:f4_expectation} follows.
  Note that in this interpretation, each term in the product
  is not summable on its own (e.g., $\sum_a F_X(a,\ell)=\infty$),
  but cancellation and a reasonable assumption about finiteness 
  of recent common ancestors makes differences like $F_X-F_Y$ summable.

  Finally, consider the variance.
  By Lemma \ref{thm:mean_and_cov}, 
  \begin{align}
    \var[f_4(X,Y;U,V)] 
    &= \frac{1}{G^2} \mu \left( 
    \left\{
        \path{X}{V} \path{Y}{U} - \path{X}{U} \path{Y}{V}
      \right\}^2
    \right) .
  \end{align}
  By Jensen's inequality,
  \begin{align} \label{eqn:jensen}
      \left( \path{X}{V} \path{Y}{U} - \path{X}{U} \path{Y}{V} \right)^2
      &\le
        \frac{1}{n_X n_Y n_U n_V} \sum_{x,y,u,v} \left( \path{x}{v} \path{y}{u} - \path{x}{u} \path{y}{v} \right)^2 .
  \end{align}
  Since the summand is an indicator function, 
  \begin{align}
  \mu\left(\one_{\{\ell\}} \left(\path{x}{v} \path{y}{u} - \path{x}{u} \path{y}{v} \right)^2 \right)
      &= S_{xu|yv}(\ell) + S_{xv|yu}(\ell) .
  \end{align}
  Summed over $\ell$, this gives equation \eqref{eqn:f4_variance}.

\end{proof}

Theorem \ref{thm:mean_coal_cov} describes how to estimate the covariance 
in empirical mean coalescent times between two pairs of sampled chromosomes.
Since the probability that two samples 
are heterozygous (i.e., differ) at a site is proportional to their TMRCA at that site (to first order),
it is natural to suggest the covariance of heterozygosities as an estimator
of at least a similar quantity. 
This would be arguably more natural from the population genetics point of view,
as mean heterozygosities are an easily observed statistic.
The following is therefore a useful contrast to theorem \ref{thm:mean_coal_cov}.

First, some more notation.
Let $(x_1,x_2)$ and $(y_1,y_2)$ be two pairs of sampled chromosomes,
and for each position $\ell$ in the genome,
recall that $t_{x}(\ell)$ is the total number of ancestors 
in the tree leading from $x_1$ and $x_2$ back to their common ancestor at $\ell$
(not including the common ancestor), 
and likewise for $t_y(\ell)$.
We will also need $t_{x \cap y}(\ell)$, the number of ancestors in the intersection of these two trees.
Just as $\overbar{\mu t}_x$ was defined above to be the mean time to common ancestor,
scaled by mutation rate;
so also 
$\overbar{\mu t}_{x \cap y} = (1/G) \sum_{\ell=1}^G \mu_\ell t_{x \cap y}(\ell)$.

\begin{theorem}[Empirical covariance of heterozygosities]
    \label{thm:het_covariance}
    Let $\theta_x(\ell) = 1$ if the genotypes of $x_1$ and $x_2$ differ at $\ell$,
    and $\theta_x(\ell) = 0$ otherwise, 
    and let $\bar \theta_x = (1/G) \sum_{\ell=1}^G \theta_x(\ell)$ be the mean heterozygosity of $(x_1,x_2)$;
    and likewise for $\theta_y(\ell)$ and $\bar \theta_y$.
    Then, under the infinite-sites model of mutation,
    \begin{align} \label{eqn:het_covar_mean}
        \E\left[ \left(\frac{G-1}{G^2} \sum_{\ell=1}^G \theta_x(\ell) \theta_y(\ell) \right)
        - \bar \theta_x \bar \theta_y \right]
        &= \overbar{\mu t}_{x \cap y} - \overbar{\mu t}_x \overbar{\mu t}_y 
    \end{align}
    and
    \begin{align} \label{eqn:het_covar_var}
        \var\left[ \left(\frac{G-1}{G^2} \sum_{\ell=1}^G \theta_x(\ell) \theta_y(\ell) \right)
        - \bar \theta_x \bar \theta_y \right]
        &= 
        \frac{1}{G} t_{x \cap y}(\ell) + O(1/G^3) .
    \end{align}
\end{theorem}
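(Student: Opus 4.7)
My plan is to express the quantities inside the statistic using just two indicator-valued path functions, $\phi = \path{x_1}{x_2}$ and $\psi = \path{y_1}{y_2}$, and then read off everything from Lemmas~\ref{thm:mean_and_cov} and~\ref{thm:moments}. Under the infinite-sites assumption, a single mutation on an ancestor $a$ produces a distinguishing site for $(x_1,x_2)$ exactly when $a \in \path{x_1}{x_2}$, and differentiates both pairs simultaneously exactly when $a$ lies in both paths, so $G\bar\theta_x = \muts(\phi)$ and $\sum_\ell \theta_x(\ell)\theta_y(\ell) = \muts(\phi\psi)$. Since $\phi^2 = \phi$ and $\psi^2 = \psi$, only three $\mu$-integrals appear in the entire calculation:
\begin{align}
I_x := \int \phi \, d\mu = G\,\overbar{\mu t}_x,
\qquad
I_y := \int \psi \, d\mu = G\,\overbar{\mu t}_y,
\qquad
I_{xy} := \int \phi\psi \, d\mu = G\,\overbar{\mu t}_{x \cap y}.
\end{align}

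For the mean, Lemma~\ref{thm:mean_and_cov} immediately gives $\E[\muts(\phi\psi)] = I_{xy}$ and $\E[\muts(\phi)\muts(\psi)] = I_x I_y + I_{xy}$, with the second term coming from the covariance $\int \phi\psi\, d\mu$. Substituting these into the expectation of $(G-1)G^{-2}\muts(\phi\psi) - G^{-2}\muts(\phi)\muts(\psi)$ and collecting the three resulting monomials in $G$ yields~\eqref{eqn:het_covar_mean}.

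For the variance, write the statistic as $T = a\,\muts(\phi\psi) - b\,\muts(\phi)\muts(\psi)$ with $a = (G-1)/G^2$ and $b = 1/G^2$, and expand
\begin{align}
\var[T] = a^2 \var[\muts(\phi\psi)] - 2ab\,\cov[\muts(\phi\psi),\,\muts(\phi)\muts(\psi)] + b^2 \var[\muts(\phi)\muts(\psi)].
\end{align}
Lemma~\ref{thm:mean_and_cov} gives the first term directly, $\var[\muts(\phi\psi)] = I_{xy}$, contributing the leading $a^2 I_{xy} \sim \overbar{\mu t}_{x\cap y}/G$. The remaining two pieces are third- and fourth-order moments, which Lemma~\ref{thm:moments} expands over the five set partitions of $\{1,2,3\}$ and the fifteen of $\{1,2,3,4\}$ respectively; using $\phi^2 = \phi$ and $\psi^2 = \psi$, every block collapses to one of $I_x$, $I_y$, or $I_{xy}$, and in each expansion the all-singletons partition reproduces the relevant product of means and cancels against the subtracted $\E[\,\cdot\,]\E[\,\cdot\,]$.

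The main obstacle is the bookkeeping that follows: one must collect the surviving mixed-block contributions and verify that the apparent $O(1/G)$ pieces generated by $b^2 \var[\muts(\phi)\muts(\psi)]$ cancel precisely against the corresponding pieces in $-2ab\,\cov[\muts(\phi\psi),\muts(\phi)\muts(\psi)]$, leaving $a^2 I_{xy}$ as the dominant contribution with the stated remainder. Since every surviving term is already a polynomial in the three scalars $I_x, I_y, I_{xy}$ with coefficients in powers of $1/G$, the cancellation reduces to a finite algebraic identity once the set-partition expansion is written out, but the combinatorics must be tracked carefully to confirm it.
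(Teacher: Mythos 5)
Your proposal follows essentially the same route as the paper: write the statistic as $\tfrac{G-1}{G^2}\muts(\phi\psi)-\tfrac{1}{G^2}\muts(\phi)\muts(\psi)$ with $\phi=\path{x_1}{x_2}$, $\psi=\path{y_1}{y_2}$, get the mean from $\E[\muts(\phi)\muts(\psi)]=\mu(\phi\psi)+\mu(\phi)\mu(\psi)$, and expand the variance into $\var[\muts(\phi\psi)]$, $\cov[\muts(\phi\psi),\muts(\phi)\muts(\psi)]$, and $\var[\muts(\phi)\muts(\psi)]$ via the set-partition moment formula with $\phi^2=\phi$, $\psi^2=\psi$. One detail in your variance plan is misdescribed: the subleading terms do not disappear by exact cancellation between the $b^2$ and $-2ab$ pieces (e.g.\ the contribution $4\mu(\phi)\mu(\psi)\mu(\phi\psi)/G^4$ from $b^2\var[\muts(\phi)\muts(\psi)]$ survives in the paper's final expression); rather, every term carries the common factor $\mu(\phi\psi)$ and the non-leading ones are negligible because they carry extra factors of the small quantities $\overbar{\mu t}_x$, $\overbar{\mu t}_y$ or $1/G$, so the bookkeeping you defer does go through but for a different reason than you state.
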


\begin{proof}[Proof of Theorem~\ref{thm:het_covariance}]

    Under the infinite-sites model, $\theta_x(\ell)$ and $\theta_y(\ell)$
    are both nonzero only if there is a mutation at $\ell$ that falls on both
    the path from $x_1$ to $x_2$ and the path from $y_1$ to $y_2$,
    and so 
    \begin{align}
        \sum_\ell \theta_x(\ell) \theta_y(\ell) = \int \path{x_1}{x_2}(a,\ell) \path{y_1}{y_2}(a,\ell) d\muts(a,\ell) .
    \end{align}
    For convenience, let $\phi_x = \path{x_1}{x_2}$, 
    $\phi_y = \path{y_1}{y_2}$,
    and for a test function $\psi$
    write $\muts(\psi) = \int \psi dM$ and $\mu(\psi) = \int \psi d\mu$.
    In this shorthand,
    the statistic is
    \begin{align}
        \left( \frac{G-1}{G^2} \sum_{\ell=1}^G \theta_x(\ell) \theta_y(\ell) \right)
        - \bar \theta_x \bar \theta_y 
        = \frac{G-1}{G^2} \muts(\phi_x \phi_y) - \frac{1}{G^2} \muts(\phi_x) \muts(\phi_y) .
    \end{align}

    Using Lemma~\ref{thm:moments},
    $
        \E[ \muts(\phi_x \phi_y) ]
        = \mu(\phi_x \phi_y)
    $
    and
    $
        \E[ \muts(\phi_x) \muts(\phi_y) ]
        = \mu(\phi_x \phi_y) + \mu(\phi_x) \mu(\phi_y)  ,
    $
    which combine to give 
    \begin{align}
        \E\left[ \frac{G-1}{G^2} \muts(\phi_x \phi_y) - \frac{1}{G^2} \muts(\phi_x) \muts(\phi_y) \right]
        &= \frac{1}{G} \mu(\phi_x \phi_y) - \frac{1}{G^2} \mu(\phi_x) \mu(\phi_y) ,
    \end{align}
    which is equation~\eqref{eqn:het_covar_mean}.

    Again using Lemma~\ref{thm:moments},
    and the fact that since the value of $\phi_x(a,\ell)$ is everywhere either 0 or 1, 
    so $\phi_x(a,\ell)^2 = \phi_x(a,\ell)$,
    $
        \var[ \muts(\phi_x \phi_y) ]
        = \mu(\phi_x \phi_y)
    $
    and
    \begin{align}
        \cov[\muts(\phi_x \phi_y),\muts(\phi_x) \muts(\phi_y)] 
        &=
        \mu(\phi_x \phi_y) \left( 1 + \mu(\phi_x) + \mu(\phi_y) \right) 
    \end{align}
    and
    \begin{align}
        \var[\muts(\phi_x) \muts(\phi_y)]
        &=
        \mu(\phi_x \phi_y) \left( 1
            + 2 ( \mu(\phi_x) + \mu(\phi_y) ) 
            + 2 \mu( \phi_x \phi_y )
            + 4 \mu(\phi_x) \mu(\phi_y) 
        \right) 
    \end{align}
    so finally, equation~\eqref{eqn:het_covar_var} is $1/G^4$ multiplied by
    \begin{align}
        &(G-1)^2 \var[\muts(\phi_x \phi_y) ] -2 (G-1) \cov[\muts(\phi_x \phi_y),\muts(\phi_x) \muts(\phi_y)] + \var[\muts(\phi_x) \muts(\phi_y)] \\
      \begin{split}
          &\qquad \qquad = \mu(\phi_x \phi_y) \left\{
          (G-1)^2 
          - 2 (G-1) (1+\mu(\phi_x)+\mu(\phi_y)) \right. \\
          & \qquad \qquad \qquad \left. {} + 1
              + 2 ( \mu(\phi_x) + \mu(\phi_y) ) 
              + 2 \mu( \phi_x \phi_y )
              + 4 \mu(\phi_x) \mu(\phi_y) 
          \right\} .
        \end{split}
    \end{align}

\end{proof}

We now turn to the proof of Theorem \ref{thm:mean_coal_cov},
which is similar to the previous proof, but somewhat more involved.

\begin{proof}[Proof of Theorem~\ref{thm:mean_coal_cov}]

  Here, the goal is to design a statistic that estimates the covariance in mean coalescent times.
  We scale time by mutation rate because numbers of mutations only depend on the product
  of branch length and mutation rate, which therefore cannot be disentangled
  without additional assumptions (or modeling of recombination.) 
  Therefore, if we define 
  \begin{align}
    \psi_{k,x}(a,\ell) = 
    \begin{cases}
        \frac{1}{|W_k|} \path{x_1}{x_2}(a,\ell) \qquad &\text{if} \; \ell \in W_k \\
      0 \qquad &\text{otherwise}
    \end{cases}
  \end{align}
  then $\mu(\psi_{k,x})=\overbar{\mu t}_{k,x}$ gives the mean coalescent time on window $W_k$.
  We next need something whose expectation is $\mu(\psi_{k,x}) \mu(\psi_{k,y})$.
  By Lemma~\ref{thm:moments}, this is
  \begin{align}
    \E\left[M(\psi_{k,x}) M(\psi_{k,y}) - M(\psi_{k,x} \psi_{k,y})\right] = \mu(\psi_{k,x}) \mu(\psi_{k,y}) .
  \end{align}
  Similarly, if we define $\overbar{\psi}_x(a,\ell) = \frac{1}{n} \sum_{k=1}^n \psi_{k,x}(a,\ell)$,
  so that $\mu(\overbar{\psi}_x) = \overbar{\overbar{\mu t}}_x$,
  we need something whose expectation is $\mu(\overbar{\psi}_x) \mu(\overbar{\psi}_y)$,
  which by the same lemma is
  \begin{align}
    \E\left[M(\overbar{\psi}_{x}) M(\overbar{\psi}_{y}) - M(\overbar{\psi}_{x} \overbar{\psi}_{y})\right] = \mu(\overbar{\psi}_{x}) \mu(\overbar{\psi}_{y}) .
  \end{align}
  These two combine to get
  \begin{align}
    C_{x,y} &= \frac{1}{n} \sum_{k=1}^n \left\{ M(\psi_{k,x}) M(\psi_{k,y}) - M(\psi_{k,x} \psi_{k,y}) \right\} 
    - \left\{ M(\overbar{\psi}_{x}) M(\overbar{\psi}_{y}) - M(\overbar{\psi}_{x} \overbar{\psi}_{y}) \right\} .
  \end{align}
  Now note that by linearity of $M$,
  \begin{align}
    M(\overbar{\psi}_x) M(\overbar{\psi}_y) - M(\overbar{\psi}_x \overbar{\psi}_y) 
    =
    \frac{1}{n^2} \sum_{j=1}^n \sum_{k=1}^n \left( M(\psi_{j,x})M(\psi_{k,y}) - M(\psi_{j,x} \psi_{k,y}) \right) .
  \end{align}
  Defining $Z_{jk} = M(\psi_{j,x})M(\psi_{k,y}) - M(\psi_{j,x} \psi_{k,y})$,
  then
  \begin{align}
    \label{eqn:C_and_Z}
    C_{x,y} &= \frac{1}{n} \sum_{k=1}^n Z_{kk} - \frac{1}{n^2} \sum_{j=1}^n \sum_{k=1}^n Z_{jk} ,
  \end{align}
  which,
  since $M(\psi_{k,x}) = N_k(x)/|W_k|$ and $M(\psi_{k,x}\psi_{k,y}) = N_k(x,y)/|W_k|^2$,
  is equation~\eqref{eqn:coal_cov_defn}.

  The substantially lengthier calculation of the variance is postponed until the Appendix.

\end{proof}







\subsection*{The infinite sites assumption}

Some of the simplicity above
(and more generally in population genetics)
relies on the assumption that only one mutation
can occur at each site,
which generally results in expressions that are correct up to a factor proportional to 
the fraction of sites at which more than one mutation has occurred.
To illustrate this, suppose that more than one mutation
can occur per site;
in other words, the Poisson process of mutations happens on a discrete, not continuous, set.

Assume for the moment that all mutation rates are equal: $\mu_\ell = \mu$.
Fix a pairs of sampled chromosomes $(x_1,x_2)$,
and define $\xi_n(\ell)$ is defined to be 1 if $n$ mutations have occurred 
at site $\ell$ between $x_1$ and $x_2$,
and let $\bar \xi_n = (1/G) \sum_{\ell=1}^G \xi_n(\ell)$.
The expected value of $\bar \xi_n$ under this model is
\begin{align}
  \frac{1}{G} \sum_{\ell=1}^G e^{-\mu t_x(\ell)} \frac{ (\mu t_x(\ell) )^n }{n!} ,
\end{align}
and so an estimate of $n^\mathrm{th}$ moment of the empirical distribution of $t_x(\ell)$ could be
$n! \frac{ \bar \xi_n }{ \bar \xi_0 }$.
However, since this is essentially estimating the density of sites at which there were $n$ mutations,
to have any accuracy requires a reasonable number of such sites,
and distinguishing these from sequencing error.
In practice, this is highly problematic,
and is confounded by mutation rate heterogeneity.

%

\section*{Discussion}

In this paper I have explored an empirical approach to demographic inference in population genetics, 
by treating the (empirical, realized) ancestral recombination graph
as a complex, unobserved, object we wish to learn about,
rather than as an intermediate layer that is averaged over in the course of inferring parameters of interest
in higher-level stochastic models (e.g., coalescent models).
This approach certainly does not replace coalescent theory,
but seems useful in that it can provide 
more concretely interpretable results to non-specialists
(e.g., ``numbers of common ancestors'' rather than ``coalescent rates''),
and intuition about what statistics have the best power to distinguish between alternative population models.
This way of thinking is certainly not new,
but data that give us power to infer quantities directly at this level of abstraction is.

Additionally, I have described a new formalism 
that can simplify calculations related to population genetics statistics
by writing these as integrals of functions against a Poisson random measure,
which models the locations of mutations on the genomes of all possible ancestors.

This point of view, and this formalism,
leads to two general sorts of interpretations for a given statistic:
first, in terms of the distribution of trees along which the samples are related at each locus;
and second, in terms of weighted sums over ancestral chromosomes.
This duality is easy to see considering simple examples such as pairwise divergences,
and is less obvious for more complex statistics such as $f_4$.

\paragraph{$f_4$ and family}
Theorem \ref{thm:f_4} gives an interpretation of the $f_4$ statistic
as a sum of products of differences in ancestry, over all ancestral genomes.
\citet{patterson2012ancient} interprets this and related statistics in terms of
\emph{shared drift} along branches of an \emph{admixture graph}
(in which each edge is a randomly mating population).
Since genetic drift is determined by the sharing of common ancestors,
Theorem \ref{thm:f_4} can be seen as a more precise statement of the same observation.
Since the main point of this paper is to lay out the general framework,
I have not undertaken the task of recasting the entire family of related statistics 
(e.g., ABBA-BABA),
but the general way forward can be seen by analogy to $f_4$.

\paragraph{The unknown network of ancestors}
The problem at hand, 
to infer aspects of the unknown pedigree through which genetic material has been noisily inherited,
has some parallels to the problem of \emph{active network tomography}
\citep[reviewed in][]{lawrence2006network}:
given information on losses and delays of ``probe'' packets sent
between a subset of peripheral nodes in a network,
infer the topology of the network
and certain characteristics of its internal nodes.
Results in this field on identifiability \citep[e.g.,][]{singhal2007identifiability,gopalan2012identifying}
would be very interesting in this context,
although genomic data are substantially noisier, more sparsely collected, and more numerous.
Others have already made use of the parallels to phylogenetics,
in the other direction \citep{ni2011network}.

\paragraph{Recombination}
A glaring omission in this paper is any treatment of linkage between sites:
for instance, when calculating variances, each site is treated as \emph{independent}; 
surely this cannot be right,
as the trees at nearby sites are highly correlated?
But, we begin by assuming that the mutation process at each site is independent given the trees
(which seems for the most part reasonable),
and take the entire empirical \emph{ancestral recombination graph} as fixed,
which includes the locations of ancestral recombination breakpoints.
If the aim is to estimate descriptive statistics of the ARG then this is the correct point of view;
but an intermediate position would be to treat only the \emph{population pedigree} as given,
and to additionally model randomness in recombination.
This could allow for recovery of additional information left behind by recombination,
and was the point of view taken in \citet{ralph2013geography}.
This will be the subject of future work.

\paragraph{Mutation}
Above, we have allowed for heterogeneity in mutation rate,
as this is known to be substantial \citep[e.g.,][]{misawa2009evaluation}.
When applying these methods, 
it seems necessary to choose regions of the genome with comparable large-scale mutation rates,
and then within these, to average over enough sites that small-scale heterogeneity 
will not hopelessly confound comparisons.
Methods such as described in \citet{lipson2015calibrating},
may be useful in disentangling mutation rate heterogeneity from heterogeneity in demographic histories
along the genome.
Furthermore, this paper only models diallelic markers, 
and completely ignores both sequence context
and backmutation.
This is common in population genetics methods,
since the infinite sites assumption should be a good one at the typical levels of divergence encountered within species,
especially if mutation rate heterogeneity in accounted for.
Reconciliation of more realistic models of mutation
with the Poisson process method described here
would require more work.

\paragraph{Selection}
An important assumption behind this method is that mutations are independent of the ARG.
This is not true for alleles under selection,
but neither does this approach entirely assume neutrality:
since the ARG is taken as given,
if we knew which sites were under selection and excluded these from the analyses,
then the assumptions would be satisfied.
In other words, linked, selected sites are not an issue,
as they act by distorting nearby gene trees.
On the other hand, if a large fraction of segregating polymorphisms
are actively under selection, this violates the model
(but it is unclear what to replace it with).

\subsection*{Acknowledgements}
Thanks to Paul Marjoram, Graham Coop, and Yaniv Brandvain for useful discussion,
to John Wakeley for insightful comments, 
and to an anonymous reviewer for correcting an error in the proof Theorem \ref{thm:f_4}
and otherwise greatly improving the manuscript.
This work was supported by NSF grant \#1262645 (DBI)
and startup funds from USC.

\bibliography{popgen}

\appendix

\section*{The variance of $C_{x,y}$}

\begin{proof}[Continuation of proof of Theorem \ref{thm:mean_coal_cov}]

  First note that \eqref{eqn:C_and_Z} implies that
  \begin{align}
    C_{x,y} &= \frac{1}{n} \sum_{j=1}^n \sum_{k=1}^n Z_{jk} \left(\delta_{jk} - \frac{1}{n}\right) ,
  \end{align}
  where $\delta_{jk}=1$ if $j=k$ and is zero otherwise.
  To find the variance of $C_{x,y}$, we need to compute covariances of the $Z_{jk}$ terms.
  To do this, it is most efficient to record the general case.
  The following lemma follows directly from Lemma~\ref{thm:moments},
  which uses upper-case roman characters for test functions 
  to make the result more visually intuitive and easier to apply.

  \begin{lemma} \label{thm:higher_cov}
    If $A$, $B$, $C$, and $D$ are test functions as in Lemma~\ref{thm:moments},
    then
  \begin{align}
    \begin{split}
      \cov[ M(A)M(B), M(C) ]
        &= 
        \mu(A B C) 
        + \mu(A) \mu(B C) 
        + \mu(B) \mu(A C) 
    \end{split} 
  \end{align}
  and
  \begin{align}
    \begin{split}
      \cov[ M(A)M(B), M(C)M(D) ]
        &= 
        \mu(A B C D) 
        + \mu(A) \mu(B C D) 
        + \mu(B) \mu(A C D) 
        + \mu(C) \mu(A B D) 
        \\ &\qquad {}
        + \mu(D) \mu(A B C) 
        + \mu(A C) \mu(B D) 
        + \mu(A D) \mu(B C) 
        \\ &\qquad {}
        + \mu(A) \mu(C) \mu(B D) 
        + \mu(A) \mu(D) \mu(B C) 
        \\ &\qquad {}
        + \mu(A C) \mu(B) \mu(D) 
        + \mu(A D) \mu(B) \mu(C) 
    \end{split} .
  \end{align}
\end{lemma}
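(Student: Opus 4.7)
The plan is to derive both identities as direct book-keeping consequences of Lemma~\ref{thm:moments}. That lemma expands $\E[\prod_i M(\phi_i)]$ as a sum over set partitions of the index set, where each block $B$ of a partition contributes a factor $\mu(\prod_{i\in B}\phi_i)$. Each covariance on the left-hand side of Lemma~\ref{thm:higher_cov} can be written as a difference of two such moment expansions, and the claim is that subtraction leaves precisely the listed terms.

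For the first identity, I would expand $\E[M(A)M(B)M(C)]$ over all $5$ partitions of $\{1,2,3\}$, and expand $\E[M(A)M(B)]\,\E[M(C)]$ using the $2$ partitions of $\{1,2\}$ joined to the singleton $\{3\}$. The second product reproduces exactly those partitions of $\{1,2,3\}$ in which $\{3\}$ is isolated, so the subtraction cancels those two terms and leaves the three ``mixed'' partitions $\{\{1,2,3\}\}$, $\{\{1,3\},\{2\}\}$, $\{\{2,3\},\{1\}\}$, which match the three terms in the stated formula.

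The second identity works by the same mechanism with $n=4$. Expand $\E[M(A)M(B)M(C)M(D)]$ over the $15$ partitions of $\{1,2,3,4\}$, and expand $\E[M(A)M(B)]\,\E[M(C)M(D)]$ as a product of sums whose $4$ terms correspond to the partitions of $\{1,2,3,4\}$ that refine the coarse split $\{\{1,2\},\{3,4\}\}$. Those are precisely the partitions in which no block crosses the divide between $\{A,B\}$ and $\{C,D\}$, so subtraction cancels them and leaves the $11$ crossing partitions, which should match the $11$ terms on the right-hand side.

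The only real obstacle is combinatorial bookkeeping in the $n=4$ case: enumerating all $15$ partitions, singling out the $4$ that factor across the split, and verifying the remaining $11$ term-by-term against the written expression. A useful sanity check during the enumeration is that every surviving partition must contain at least one block pairing an element of $\{A,B\}$ with an element of $\{C,D\}$, so each corresponding term features at least one integral whose argument mixes letters from both halves of the covariance.
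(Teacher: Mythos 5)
Your proposal is correct and matches the paper's own (unprinted) proof, which likewise expands $\E[M(A)M(B)M(C)M(D)]$ over all $15$ partitions via Lemma~\ref{thm:moments} and subtracts the product $\left\{\mu(AB)+\mu(A)\mu(B)\right\}\left\{\mu(CD)+\mu(C)\mu(D)\right\}$, cancelling exactly the $4$ partitions refining the split $\{A,B\}$ versus $\{C,D\}$ (and analogously for the three-function case). The partition counts and the surviving-term identification you describe are exactly right.
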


\details{
\begin{proof}[of Lemma~\ref{thm:higher_cov}]
  \begin{align}
    \begin{split}
      \cov[ M(A)M(B), M(C)M(D) ]
        &= 
        \mu(A B C D ) 
        + \mu(A) \mu(B C D) 
        + \mu(B) \mu(A C D) 
        \\ &\qquad {}
        + \mu(C) \mu(A B D) 
        + \mu(D) \mu(A B C) 
        + \mu(A B) \mu(C D) 
        \\ &\qquad {}
        + \mu(A C) \mu(B D) 
        + \mu(A D) \mu(B C) 
        + \mu(A) \mu(B) \mu(C D) 
        \\ &\qquad {}
        + \mu(A) \mu(C) \mu(B D) 
        + \mu(A) \mu(D) \mu(B C) 
        + \mu(A B) \mu(C) \mu(D) 
        \\ &\qquad {}
        + \mu(A C) \mu(B) \mu(D) 
        + \mu(A D) \mu(B) \mu(C) 
        + \mu(A) \mu(B) \mu(C) \mu(D) 
        \\ &\qquad {}
        - \left\{ 
        \mu(A B) + \mu(A) \mu(B)
        \right\} 
        \left\{ 
        \mu(C D) + \mu(C) \mu(D)
        \right\} 
    \end{split} 
  \end{align}
  and
  \begin{align}
    \begin{split}
      \cov[ M(A)M(B), M(C) ]
        &= 
        \mu(A B C ) 
        + \mu(A) \mu(B C) 
        + \mu(B) \mu(A C) 
        \\ &\qquad {}
        + \mu(C) \mu(A B) 
        + \mu(A) \mu(B) \mu(C) 
        \\ &\qquad {}
        - \left\{ 
        \mu(A B) + \mu(A) \mu(B)
        \right\} 
        \mu(C)   
    \end{split} 
  \end{align}
\end{proof}
}

If $j \neq k$,
then $\psi_{j,x}$ and $\psi_{k,x}$ are supported on disjoint parts of the ancestral genomes,
and so $\psi_{j,x} \psi_{k,y} = 0$.
For the same reason, 
the covariance of $Z_{jk}$ and $Z_{\ell m}$ is zero unless at least one of $j$ and $k$ match at least one of $\ell$ and $m$.
We are further helped by the fact that $\psi_{j,x}^n = \psi_{j,x} / W_j^{n-1}$.

First, we apply Lemma~\ref{thm:higher_cov} with $A=C=\psi_{j,x}$ and $B=D=\psi_{k,y}$
and $j \neq k$
(so, note that $AB=AD=CB=CD=0$).
\begin{align}
  \var[Z_{jk}]
  &=
  \frac{ \mu(\psi_{j,x}) \mu(\psi_{k,y}) }{ |W_j| |W_k| } 
    \left\{ 1  + |W_j| \mu(\psi_{j,x}) + |W_k| \mu(\psi_{k,y}) \right\}  
\end{align}
Similarly,
if $A=\psi_{j,x}$, $B=\psi_{k,y}$, $C=\psi_{k,x}$, and $D=\psi_{j,y}$, 
and $j \neq k$
(so, note that $AB=AC=BD=CD=0$).
\begin{align}
    \begin{split}
      \cov[Z_{jk}, Z_{kj}]
      &= 
        \mu(\psi_{j,x} \psi_{j,y}) \mu(\psi_{k,y} \psi_{k,x}) 
        + \mu(\psi_{j,x}) \mu(\psi_{j,y}) \mu(\psi_{k,y} \psi_{k,x})
        + \mu(\psi_{j,x} \psi_{j,y}) \mu(\psi_{k,y}) \mu(\psi_{k,x})   
    \end{split}
\end{align}

Now, with $A=C=\psi_{j,x}$, $B=\psi_{k,y}$, and $C=\psi_{l,y}$,
and $j$, $k$ and $l$ all distinct,
\begin{align}
  \cov[Z_{jk},Z_{jl}]
  &=
  \frac{1}{W_j} \mu(\psi_{j,x}) \mu(\psi_{k,y}) \mu(\psi_{l,y})   
\end{align}
and similarly,
\begin{align}
  \cov[Z_{kj},Z_{jl}]
  &=
  \mu(\psi_{j,x} \psi_{j,y}) \mu(\psi_{k,x}) \mu(\psi_{l,y}) .
\end{align}
    Now, with $A=C=\psi_{j,x}$, $B=\psi_{j,y}$, and $D=\psi_{k,y}$,
    \begin{align}
      \cov[ M(\psi_{j,x})M(\psi_{j,y}), M(\psi_{j,x})M(\psi_{k,y}) ]
      &=
      \mu( \psi_{k,y} ) \left\{
        \mu( \psi_{j,x} \psi_{j,y} ) 
        \left( 
          \frac{1}{W_j} 
          + \mu( \psi_{j,x} )
        \right)
        + \frac{1}{W_j} \mu( \psi_{j,x} ) \mu( \psi_{j,y} )
      \right\}   
    \end{align}
    and with $A=\psi_{j,x}$, $B=\psi_{k,y}$, and $C=\psi_{j,x}\psi_{j,y}$,
    \begin{align}
     \cov[M(\psi_{j,x}) M(\psi_{k,y}), M(\psi_{j,x}\psi_{j,y}) ]
     &=
     \frac{1}{W_j} \mu(\psi_{k,y})\mu(\psi_{j,x}\psi_{j,y}) .  
    \end{align}
These combine to give
\begin{align}
  \cov[Z_{jj},Z_{jk}]
  \details{
  &=
    \cov[ M(\psi_{j,x})M(\psi_{j,y}), M(\psi_{j,x})M(\psi_{k,y}) ] 
    - \cov[M(\psi_{j,x}) M(\psi_{k,y}), M(\psi_{j,x}\psi_{j,y}) ] \\
  }
  &= 
      \mu( \psi_{k,y} ) \left\{
        \mu( \psi_{j,x} \psi_{j,y} ) 
          \mu( \psi_{j,x} )
        + \frac{1}{W_j} \mu( \psi_{j,x} ) \mu( \psi_{j,y} )
      \right\}   
\end{align}

Finally, taking $A=C=\psi_{j,x}$ and $B=D=\psi_{j,y}$, we get
\begin{align}
  \begin{split}
    \var[ M(\psi_{j,x}) M(\psi_{j,y}) ]  
        &= 
        \frac{1}{W_j^2} \mu(\psi_{j,x} \psi_{j,y}) 
        + \frac{2}{W_j} \mu(\psi_{j,x} \psi_{j,y})
        \left( \mu(\psi_{j,x}) + \mu(\psi_{j,y}) \right)
        \\ &\qquad {}
        + \frac{1}{W_j^2} \mu(\psi_{j,x}) \mu(\psi_{j,y})
        + \mu(\psi_{j,x} \psi_{j,y})^2
        + \frac{1}{W_j} \mu(\psi_{j,x})^2 \mu(\psi_{j,y})
        \\ &\qquad {}
        + \frac{1}{W_j} \mu(\psi_{j,x}) \mu(\psi_{j,y})^2
        + 2 \mu(\psi_{j,x}) \mu(\psi_{j,y}) \mu(\psi_{j,y} \psi_{j,x})
    \end{split}   
\end{align}
and
\begin{align}
  \begin{split}
    \cov[ M(\psi_{j,x}) M(\psi_{j,y}), M(\psi_{j,x} \psi_{j,y}) ] 
    &= 
        \frac{1}{W_j^2} \mu(\psi_{j,x}\psi_{j,y}) 
        + \frac{1}{W_j} \mu(\psi_{j,x} \psi_{j,y}) 
        \left( \mu(\psi_{j,x}) + \mu(\psi_{j,y} \right)
  \end{split}  
\end{align}
and
\begin{align}
  \var[ M(\psi_{j,x} \psi_{j,y}) ]
  &=
  \frac{1}{W_j^2} \mu( \psi_{j,x} \psi_{j,y} )  
\end{align}
so
\begin{align}
  \var[Z_{jj}] 
  &= 
  \var[ M(\psi_{j,x}) M(\psi_{j,y}) ] 
  - 2 \cov[ M(\psi_{j,x}) M(\psi_{j,y}), M(\psi_{j,x} \psi_{j,y}) ] 
  + \var[ M(\psi_{j,x} \psi_{j,y}) ] \\  
  \begin{split}
        &= 
        \mu(\psi_{j,x}) \mu(\psi_{j,y})
        \left(
          \frac{1}{W_j^2} + \frac{1}{W_j} (\mu(\psi_{j,x}) + \mu(\psi_{j,y}))
          + 2 \mu(\psi_{j,y} \psi_{j,x})
        \right)
        + \mu(\psi_{j,y} \psi_{j,x})^2
    \end{split}   
\end{align}

We can put these together to obtain that
  \begin{align}
    \label{eqn:full_C_variance}
    \var[C_{x,y}] 
    &= 
    \frac{1}{n^2} \sum_{1 \le j,k,\ell,m \le n} \cov[Z_{jk},Z_{\ell m}] \left(\delta_{jk} - \frac{1}{n}\right)\left(\delta_{\ell m} - \frac{1}{n}\right) \\
    \begin{split}
      &= 
      \frac{(n-1)^2}{n^4} \sum_j \var[Z_{jj}]
      - \frac{2(n-1)}{n^4} \sum_{j \neq k} \cov[Z_{jj}, Z_{jk} + Z_{kj}]
      + \frac{1}{n^4} \sum_{j \neq k} \var[Z_{jk}]
      \\&\qquad {}
      + \frac{1}{n^4} \sum_{j \neq k} \cov[Z_{jk},Z_{kj}]
      + \frac{1}{n^4} \sum_{j \neq k \neq \ell \neq j} \cov[Z_{jk},Z_{j\ell}] 
      \\&\qquad {}
      + \frac{1}{n^4} \sum_{j \neq k \neq \ell \neq j} \cov[Z_{kj},Z_{\ell j}] 
      + \frac{2}{n^4} \sum_{j \neq k \neq \ell \neq j} \cov[Z_{kj},Z_{j\ell}] 
    \end{split} \\  
    \begin{split}
      &= 
      \frac{(n-1)^2}{n^4} \sum_j 
        \left(
          \mu(\psi_{j,x}) \mu(\psi_{j,y})
          \left(
            \frac{1}{W_j^2} + \frac{1}{W_j} (\mu(\psi_{j,x}) + \mu(\psi_{j,y}))
            + 2 \mu(\psi_{j,y} \psi_{j,x})
          \right)
          + \mu(\psi_{j,y} \psi_{j,x})^2
        \right)
      \\&\qquad {}
      - \frac{2(n-1)}{n^4} \sum_{j \neq k} 
        \left(
          \mu( \psi_{k,y} ) \left\{
            \mu( \psi_{j,x} \psi_{j,y} ) 
              \mu( \psi_{j,x} )
            + \frac{1}{W_j} \mu( \psi_{j,x} ) \mu( \psi_{j,y} )
          \right\} 
        \right.  \\&\qquad \qquad \left. {}
          +
          \mu( \psi_{k,x} ) \left\{
            \mu( \psi_{j,y} \psi_{j,x} ) 
              \mu( \psi_{j,y} )
            + \frac{1}{W_j} \mu( \psi_{j,y} ) \mu( \psi_{j,x} )
          \right\} 
        \right)
      \\&\qquad {}
      + \frac{2}{n^4} \sum_{j \neq k} 
        \left(
          \frac{1}{W_j} \mu(\psi_{j,x}) \mu(\psi_{k,y}) \left\{ \frac{1}{W_j}  + \mu(\psi_{j,x}) + \mu(\psi_{k,y}) \right\}
        \right)
      \\&\qquad {}
      + \frac{1}{n^4} \sum_{j \neq k} 
        \left(
          \mu(\psi_{j,x} \psi_{j,y}) \mu(\psi_{k,y} \psi_{k,x}) 
          + \mu(\psi_{j,x}) \mu(\psi_{j,y}) \mu(\psi_{k,y} \psi_{k,x})
          + \mu(\psi_{j,x} \psi_{j,y}) \mu(\psi_{k,y}) \mu(\psi_{k,x}) 
        \right)
      \\&\qquad {}
      + \frac{1}{n^4} \sum_{j \neq k \neq \ell \neq j} 
          \frac{1}{W_j} \mu(\psi_{j,x}) \mu(\psi_{k,y}) \mu(\psi_{l,y}) 
      \\&\qquad {}
      + \frac{1}{n^4} \sum_{j \neq k \neq \ell \neq j} 
          \frac{1}{W_j} \mu(\psi_{j,y}) \mu(\psi_{k,x}) \mu(\psi_{l,x}) 
      \\&\qquad {}
      + \frac{2}{n^4} \sum_{j \neq k \neq \ell \neq j} 
          \mu(\psi_{j,x} \psi_{j,y}) \mu(\psi_{k,x}) \mu(\psi_{l,y}) .
    \end{split} 
  \end{align}

  The above is dominated by terms of the form $(\overbar{\mu t})^3 / n |W|$,
  where $\overbar{\mu t}$ is an average of $\mu(\psi_{\cdot})$ terms.
  To put a crude bound on this,
  assume that $\mu(\psi_{j,x})$ and $\mu(\psi_{j,y})$ are bounded by $\epsilon$ for every $j$,
  resulting in $\var[C_{x,y}] \le \frac{\epsilon^3}{n|W|}$,
  as in the theorem.

\end{proof}

\end{document}